\newtheorem{thm}{Theorem}
\newtheorem{lem}{Lemma}
\newtheorem{cor}{Corollary}
\newcommand{\RN}[1]{%
	\textup{\uppercase\expandafter{\romannumeral#1}}%
}
\newcommand{\blind}{0}
\DeclareMathOperator*{\argmin}{argmin}
\begin{document}

\def\spacingset#1{\renewcommand{\baselinestretch}%
	{#1}\small\normalsize} \spacingset{1}


\if0\blind
{
	\title{\bf Optimal Sampling for Generalized Linear Models under Measurement Constraints}
\author[1]{Tao Zhang\thanks{email: tz284@cornell.edu}}
\author[1]{Yang Ning \thanks{email: yn265@cornell.edu}}
\author[1,2]{David Ruppert\thanks{email: dr24@cornell.edu}}

\affil[1]{Department of Statistics and Data Science, Cornell University}
\affil[2]{School of Operations Research and Information Engineering, Cornell University}
	\maketitle
} \fi

\if1\blind
{
	\bigskip
	\bigskip
	\bigskip
	\begin{center}
		{\LARGE\bf Title}
	\end{center}
	\medskip
} \fi

\bigskip

\baselineskip=2\baselineskip

\begin{abstract}
\baselineskip=2\baselineskip
Under ``measurement constraints,"  responses are expensive to measure
and initially unavailable on most of records in the dataset, but the covariates are available for the entire dataset.  Our goal is to sample a relatively small portion of the dataset where the expensive responses will be measured and the resultant sampling estimator is statistically efficient.  
Measurement constraints require the sampling probabilities can only depend on a very small set of the responses.	  A sampling procedure that
uses responses at most only on a small pilot sample will be called ``response-free.''
We propose a  response-free sampling procedure \mbox{(OSUMC)} for generalized linear models (GLMs).  Using the A-optimality criterion, i.e.,
the trace of the asymptotic variance, the resultant estimator is statistically efficient within a class of sampling estimators.
 We establish the unconditional asymptotic distribution of a general class of response-free sampling estimators.  This result is novel compared with the existing conditional results obtained by conditioning on both covariates and responses. 
Under our unconditional framework, the subsamples are no longer independent and  new martingale techniques are developed for our asymptotic theory. We further derive the A-optimal response-free sampling distribution. Since this distribution depends on population level quantities, we propose the Optimal Sampling Under Measurement Constraints (OSUMC) algorithm to approximate the theoretical optimal sampling. Finally, we conduct an intensive empirical study to demonstrate the advantages of OSUMC algorithm over existing methods in both statistical and computational perspectives.  We find that OSUMC's performance is comparable to that of sampling algorithms that use complete responses. This shows that, provided an efficient algorithm such as OSUMC is used, there is little or no loss in accuracy due to the unavailability of responses because of measurement constraints.
	\end{abstract}

\noindent%
{\it Keywords:}  Generalized linear models, Measurement constraints, Unconditional asymptotic distribution, Martingale central limit theorem, A-optimality 
\vfill

\setcounter{section}{0} 
\setcounter{equation}{0} 

\section{Introduction}
\subsection{Motivation and Contribution}
Measurement constrained datasets \citep{WangY2017} where only a small portion of the data points have known a response, $Y$, but the covariates, $X$,  are available for all data points, are common in practice. Datasets of this type can happen when the response  is more expensive or time-consuming to collect than covariates.  We will present two motivating examples. For more real-world examples, we refer readers to semi-supervised learning literature~(e.g., \citealt{Zhu2005,Chapelle2006,Chakrabortty2018}).

\bigskip
\noindent
\textbf{Example 1.}  \textbf{(Critical Temperature of Superconductors)} \quad Critical temperature, which is sensitive to chemical composition, is an important property of superconducting materials. Since no scientific model for critical temperature prediction is available~\citep{Hamidieh2018}, a data-driven prediction model is desirable to guide researchers synthesizing superconducting materials with higher critical temperature. Due to the cost in both money and time for material synthesis, only a small portion of the thousands of potential chemical compounds can be manually tested. So selecting representative compositions to build a statistical model with maximum efficiency is important. \\
\noindent
\textbf{Example 2.}  \textbf{(Galaxy Classification)} \quad Galaxy classification is an important task in astronomy~\citep{Banerji2010}.  Visual classification is time-consuming and expensive and is becoming infeasible because the size of astronomical datasets is growing rapidly as more advanced telescopes enter operation. The size of modern galaxy datasets is often of millions or even billions~\citep{Reiman2019}.  It is important to select a representative subsample of galaxies which can be classified accurately by humans, so that an effective classification model can be built.

\bigskip
In addition to many responses being missing, another characteristic of the datasets in these examples is the extreme size which brings huge challenges to statistical computing, data storage and communication. Sampling is a popular approach to super-large datasets where a small portion of the dataset is sampled and used as a surrogate of the entire dataset.  There is a large literature on this problem, e.g., \cite{Wang2017}, but many of the proposed sampling methods assume that the response is known on the entire dataset  so are not applicable under measurement constraints.  Moreover, they might not increase statistical efficiency even if they were applicable. 

The problem addressed in the paper is fitting a generalized linear model (GLM) efficiently to massive measurement-constrained datasets using only a relatively-small subsample obtained by sampling. 
Measurement constrained datasets require sampling methods to be \emph{response-free} meaning that the sampling probabilities can only depend on responses of  a small pilot sample.
Though a huge literature has been devoted to sampling methods for datasets where all records have both responses and covariates, efficient sampling algorithms under measurement constraints, which selects subsamples in a (nearly) response-independent way, are less well-studied. 
We bridge this gap by proposing a statistically efficient sampling method under measurement constraint (OSUMC) for GLMs which does not require knowledge of responses (except in the pilot sample). Specifically, OSUMC selects subsamples with replacement according to a response-free A-optimal sampling distribution and computes the  estimator by solving a weighted score equation based on the selected subsamples. We defer the details to Section \ref{gss} and Section \ref{thr}. 

This paper contributes to both theoretical and computational perspectives of the growing literature on statistical sampling. On the theoretical side, we prove asymptotic normality of a general class of response-free sampling estimators without conditioning on the data.
Our asymptotic results are significantly different from conditional asymptotics in the traditional sampling literature which condition on both covariates and responses.  
 In particular, the conditional independence of subsamples in the conditional asymptotics is no longer true under our framework. To deal with the correlation in our sampling estimator, we develop novel martingale techniques for our asymptotic theory.  Another significant difference is that our asymptotic results compare our sampling estimator to the true parameter rather than to the maximum likelihood estimator as done by \cite{Wang2017}.  Since the true parameter is the object of interest, our results are more informative. 
 Based on the asymptotic theory, we derive the optimal sampling probabilities by minimizing the A-criterion \citep{Khuri2006} i.e., the trace of asymptotic variance matrix, which is equivalent to minimizing the sum of the asymptotic mean squared errors. On the computational side, we propose the OSUMC algorithm to approximate the theoretical A-optimal sampling method.  Our sampling algorithm achieves  an \emph{optimal design} by assigning higher probabilities to data points that achieve estimators with higher efficiency.  We show in our numerical study that the performance of OSUMC is comparable to that of sampling algorithms which use complete responses to calculate sampling probabilities.  This shows that, provided that an efficient algorithm such as OSUMC is used, there is little or no loss in accuracy due to not having the responses available for sampling because of measurement constraints.

\subsection{Related Work}
A large literature provide numerous variants of subsampling algorithms for linear regression \citep{Drineas2006,Drineas2011,Ma2015,wang2019information}. One traditional approach is the leverage sampling 
which defines the sampling probabilities based on the empirical leverage scores of the design matrix \citep{Huber2011}.
In more recent literature, leverage sampling serves as an important basis for more refined sampling methods of linear regression \citep{Drineas2006,Drineas2011,Ma2015}. 
These papers fall in a paradigm termed \emph{algorithmic leveraging}, which is fundamental to randomized numerical linear algebra (RandNLA) which aims at developing fast randomized algorithms for large-scale matrix-based problems. We refer readers to \cite{Ma2015} and \cite{drineas2016randnla} for more references. However, most of the algorithmic leveraging literature is concerned with the numerical performance of algorithms and only a few papers, for example, \cite{Ma2015} and \cite{raskutti2016statistical}, provide statistical guarantees. 
\cite{Ma2015} derives bias and variance for their sampling methods while \cite{raskutti2016statistical} provide statistical error bounds for their sketching method.
 In contrast, we will focus on the asymptotic efficiency which 
has not been considered in the algorithmic leveraging or RandNLA literature.  Because leverage sampling for the least-square estimator is response-free it can adapt to measurement constraints, but this is not true for other GLMs. 

For other GLMs, \cite{Wang2017} developed an A-optimal sampling procedure for logistic regression. Though a similar optimality criterion is adopted, our paper is very different from \cite{Wang2017}. Most importantly, the sampling procedure in \cite{Wang2017} requires the complete set of responses and therefore cannot be used under measurement constraints, while OSUMC is tailored for that setting. From a theoretical perspective, the unconditional asymptotics derived in this paper are much more challenging than the conditional asymptotic setting considered in \cite{Wang2017}, which conditions on both covariates and responses. The essential conditional independence assumption in the proof of \cite{Wang2017} is no longer valid and the correlated structure of the sampling estimator must be treated in our asymptotic theory. New martingale techniques are developed which are significantly different from the techniques used in \cite{Wang2017}. Besides, unconditional framework allows us to prove the asymptotic results between our sampling estimator and the true parameter, instead of the conditional MLE as in \cite{Wang2017}. A recent paper by \cite{ai2018optimal} generalized the results in \cite{Wang2017} to other GLMs under a similar framework, so again not applicable to the problem of measurement constraints. In another paper by \cite{Ting2018}, the authors studied optimality of sampling for asymptotic linear estimators. Their conclusions will reduce to exactly the same results in \cite{Wang2017} for logistic regression, and hence cannot deal with measurement constraints neither.

One new research area also dealing with measurement constraints is \emph{semi-supervised learning} (SSL). SSL attempts to use the unlabeled $X$ ($X$ without $Y$) to improve statistical performance. Though a huge SSL literature is devoted to algorithmic aspects~\citep{Zhu2005,Chapelle2006}, only a few recent papers studied statistical estimation under the semi-supervised setting~\citep{zhang2016semi,cai2018semi,Chakrabortty2018}. For example, \cite{Chakrabortty2018} estimate linear regression coefficients by regressing imputations of unobserved responses on the corresponding covariates.
 However, their method is computationally prohibitive for large datasets due to the nonparametric imputation approach while our method is computationally affordable even for massive datasets as demonstrated in out empirical study.
  
Another closely related area is \emph{optimal experiment design}  which determines the settings of covariates that yield estimators with optimal properties \citep{Khuri2006}.  Whereas the design is determined  freely in classical experiment design~\citep{Pukelsheim2006}, the design points of this paper must be selected from the original dataset. Also, the solutions of the traditional optimal design problem are often combinatorial, which are computationally infeasible for even moderate size datasets. \cite{WangY2017} propose sampling algorithms based on the convex relaxation of the traditional combinatorial problem.  However, \cite{WangY2017} mainly focus on algorithms and only a 
mean square error bound is provided for the statistical guarantee. Their results are also proved under traditional conditional framework which is different from the unconditional analysis of this paper.   Additionally, independence assumptions for with-replacement sampling in \cite{WangY2017} cannot be justified in many real situations, for instance, the two motivating examples. Such assumptions are avoided in our theory. On the practical side, our method offers simple closed forms of the optimal sampling weights while \cite{WangY2017}'s procedure involves solving a semi-definite programming problem which can be computationally intensive for large datasets. 
\subsection{Organization of the Paper}
The remainder of the paper is organized as follows. In Section \ref{bg}, we give a brief overview of GLMs. In Section \ref{gss}, we formulate a general response-free sampling scheme and define the class of sampling estimators of interest. In Section \ref{thr}, we derive the asymptotic normality of our sampling estimator and
define the A-optimal sampling distribution. OSUMC algorithm is proposed to approximate the theoretical optimal sampling procedure. Section \ref{sr} compares OSUMC algorithm with several existing sampling methods on both synthetic and real data, where linear and logistic regression models are mainly discussed. We summarize the paper in Section \ref{con}. Proofs are in the supplementary materials. 

\section{Background and Setup }\label{bg}
We begin with the background on GLMs.   Assume $n$ independent and identically distributed data couples, $(X_1,Y_1),\dots,(X_n,Y_n)\sim (X,Y)$, where $X\in \mathbb{R}^p$ is a  covariate vector, $Y\in \mathbb{R}$ is the response, and $Y$ given $X$ satisfies a GLM with the canonical link:
$$P(Y|X,\beta_0,\sigma)\propto \exp\left\{\frac{Y\cdot X^T\beta_0- b(X^T\beta_0)}{c(\sigma)} \right\},$$
Here $b(\cdot)$ is a known function, $\sigma$ is a known dispersion parameter, and $\beta_0\in \mathbb{R}^p$ is the unknown parameter of interest and assumed to be in a compact set $\mathcal{B}\subseteq \mathbb{R}^p$.
Without loss of generality, we take $c(\sigma)=1$.   
The standard estimator of $\beta_0$ is the MLE
$$\hat{\beta}_{\rm mle}\in \argmin_{\beta\in \mathbb{R}^p}\left[- \frac{1}{n} \sum_{i=1}^n\left\{Y_i\cdot X_i^T\beta-b(X_i^T\beta)\right\} \right].$$
Equivalently, we could solve the score equation to obtain the MLE
$$\Psi_n(\beta):=\frac{1}{n}\sum_{i=1}^n\psi_{\beta}(X_i,Y_i)=\frac{1}{n} \sum_{i=1}^n\left\{b'(X_i^T\beta)-Y_i\right\}\cdot X_i=0.$$
Iterative methods such as Newton's method and its variants are usually adopted to solve such problems numerically \citep{mccullagh1989generalized,aitkin2005statistical}. If the sample size $n$ is very large,  the computational cost for just one iteration will be huge. Therefore, a sampling approach can be used to reduce computational cost. 

In addition, we assume a measurement constraint setting where only a small portion of responses are available initially. As mentioned in the introduction, this setting is common in practice. The main purpose of this paper is to develop a unified and statistically efficient sampling procedure for GLMs under measurement constraints.

\section{General Sampling Scheme}\label{gss}
We first present in Algorithm \ref{A1} a general response-free sampling scheme for GLMs. The class of response-free sampling estimators is defined accordingly.
\begin{algorithm}[h]

\smallskip
	\begin{enumerate}
		\item Sample with replacement from the original $n$ data points $r$ times with probabilities $\pi=\{\pi_i\}_{i=1}^n$, where we require that $\pi_i$ only depends on $(X_1,\dots,X_n)$ and a pilot estimate of $\beta$, but not $(Y_1,\dots,Y_n)$. Collect the subsample ${(X_i^*,Y_i^*)}_{i=1}^r$, where we let  $(X_i^*,Y_i^*)$ denote the data sampled out in the $i$-th step.
		\item Define the reweighted score function as
		$$ \Psi_n^*(\beta):=\frac{1}{r} \sum_{i=1}^r\frac{b'(X_i^{*T}\beta)-Y_i^*}{n\pi_i^*}\cdot X_i^*$$
		where $\pi_i^*$ corresponds to the sampling probability of $(X_i^*,Y_i^*)$.
		\item Solve the reweighted score equation $\Psi^*_n(\beta)=0$ to get the estimator $\hat{\beta}_n$. 
	\end{enumerate}
	\caption{Response-free sampling procedure for GLMs}\label{A1}
\end{algorithm}

 We emphasize that Algorithm \ref{A1} is tailored for the measurement constraints setting, as the sampling weight $\pi_i$ only depends on $(X_1,\dots,X_n)$ not $(Y_1,\dots,Y_n)$ which may not be completely observable. In practice, once the subsample of size $r$ are selected, one need measure only those $r$ responses. This may bring huge economic savings as response measurements are usually expensive in measurement constrained situations.
 
 An additional benefit of sampling is cost savings.
  In the last step of Algorithm~\ref{A1}, Newton's method and its equivalent variants are usually adopted \citep{mccullagh1989generalized, aitkin2005statistical}. Given the sampling probabilities $\pi$, Algorithm \ref{A1} dramatically reduces the computational and storage costs by making them scale in $r$ instead of $n$ which could be much larger than $r$. More concretely, if $n=10^6$ and $p= 20$, the computational time of one iteration of Newton's method for the full sample MLE is $O(np^2)=O(4\times10^8)$.  In addition, if each data point occupies 1MB of storage space, then the original dataset would occupy around 1TB space. In contrast, for Algorithm \ref{A1} with $r=1,000$,  the computational time for each iteration is $O(4\times 10^5)$ and the storage space is less than 1 GB, which substantially lower the computational and storage cost. 
 
 The performance of Algorithm \ref{A1} depends crucially on the choice of the sampling weights $\pi_i$  and the subsample size $r$.   Under measurement constraints, subsample size $r$ is determined by the cost of measuring the response and, perhaps, by the availability of the computational or storage resources. The more important question is how to determine the sampling distribution in a data-driven approach such that the resultant sampling estimator achieves optimal efficiency. We will answer this question in the next section by defining the A-optimal sampling distribution based on the asymptotic results therein.

\section{Optimal Sampling Procedure and Asymptotic Theory}\label{thr}
In this section, we assume the classical asymptotic setting in which $n\to \infty$ and $p$ is fixed. 
We first show the asymptotic normality of the sampling estimator defined in Algorithm~\ref{A1}, and then find the A-optimal sampling distribution by minimizing the asymptotic mean squared error.  
\subsection{Notation}
The $j$th entry of the covariate vector $X_i$ is denoted by $x_{ij}$. For $X \in \mathbb{R}^p$, $||X||$ is the Euclidean norm of  $X$. We also define tuple notations: $X_1^n:=(X_1,X_2,\dots,X_n)$ and $Y_1^n:=(Y_1,Y_2,\dots,Y_n)$. $V(X)$ and $E(X)$ denote the variance and expectation of $X$, respectively.
\subsection{Consistency of $\hat{\beta}_n$}
We first show the statistical consistency of $\hat{\beta}_n$.
\begin{thm}[Consistency of $\hat{\beta}_n$]\label{Con}
 Assume the following conditions	
	\begin{enumerate}[(i)]
		\item Either $(ia)$ $b^{''}(\cdot)$ is bounded or $(ib)$ X is bounded.  \label{Con1}
		\item $EXX^T$ is finite and $\zeta(\beta):=E\left[\{b'(X^T\beta)-Y\}X\right]$ is finite for any $\beta\in \mathcal{B}$. \label{Con2}
		\item $\sum_{i=1}^n E\left[ \frac{\{b'(X_i^T\beta)-Y_i\}^2}{\pi_i} x_{ij}^2\right]=o(n^2r)$ for $1\le j\le p$ and $\beta \in \mathcal{B}$. \label{Con3}
		\item $\inf_{\beta:||\beta-\beta_0||\ge \epsilon}||\zeta(\beta)||>0$ for any $\epsilon>0$. \label{Con4}
	\end{enumerate}
 Then $\hat{\beta}_n\mathop{\to}\limits^{p} \beta_0$.
\end{thm}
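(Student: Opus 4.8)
The plan is to read $\hat\beta_n$ as a Z-estimator, namely the zero of the random map $\Psi_n^*$, and to show that $\Psi_n^*$ converges uniformly over the compact set $\mathcal{B}$ to a deterministic map whose unique, well-separated zero is $\beta_0$. The natural candidate limit is $\zeta(\beta)=E\left[\{b'(X^T\beta)-Y\}X\right]$, which is finite by condition (ii) and is the gradient of the convex population risk $\beta\mapsto E[b(X^T\beta)-YX^T\beta]$. Because the canonical-link GLM satisfies $E[Y\mid X]=b'(X^T\beta_0)$, we have $\zeta(\beta_0)=0$, so $\beta_0$ is a zero of the limit, and condition (iv) says this zero is \emph{well separated}.

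The structural observation that replaces the conditional-independence arguments of the classical sampling literature is that, conditionally on the full data $(X_1^n,Y_1^n)$, the weights $\pi_i$ are fixed and the $r$ draws are i.i.d.; a single reweighted term then has conditional mean equal to the full-sample score, i.e. $E[\Psi_n^*(\beta)\mid X_1^n,Y_1^n]=\Psi_n(\beta)$, which is the Horvitz–Thompson unbiasedness of inverse-probability weighting. I would therefore split
$$\Psi_n^*(\beta)-\zeta(\beta)=\{\Psi_n^*(\beta)-\Psi_n(\beta)\}+\{\Psi_n(\beta)-\zeta(\beta)\}.$$
The second bracket vanishes in probability by the law of large numbers under condition (ii). For the first bracket I would bound the conditional variance of the $j$-th coordinate by $\frac{1}{rn^2}\sum_{i=1}^n\frac{\{b'(X_i^T\beta)-Y_i\}^2}{\pi_i}x_{ij}^2$, the factor $1/r$ arising from averaging $r$ conditionally i.i.d. draws; taking total expectation, condition (iii) makes this $o(1)$, so the first bracket tends to $0$ in $L^2$ and hence in probability. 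This yields pointwise convergence $\Psi_n^*(\beta)\mathop{\to}\limits^{p}\zeta(\beta)$ for each fixed $\beta$.

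To pass from pointwise to uniform convergence over $\mathcal{B}$ I would use the smoothness and domination supplied by condition (i): when $b''$ is bounded, $b'$ is Lipschitz, and when $X$ is bounded, $X^T\beta$ stays bounded on the compact $\mathcal{B}$; either way $\{b'(X^T\beta)-Y\}X$ is suitably dominated and equicontinuous in $\beta$, giving a uniform law of large numbers for $\Psi_n-\zeta$. For the sampling-noise term $\Psi_n^*-\Psi_n$ the cleanest route I would prefer is to exploit convexity: $\Psi_n^*$ is the gradient of the convex reweighted negative log-likelihood, hence a monotone map, and pointwise convergence of gradients of convex functions upgrades to locally uniform convergence where the limit is differentiable; an empirical-process alternative is a conditional bracketing or stochastic-equicontinuity bound built from the Lipschitz-in-$\beta$ envelope. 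Either way one obtains $\sup_{\beta\in\mathcal{B}}\|\Psi_n^*(\beta)-\zeta(\beta)\|\mathop{\to}\limits^{p}0$. Since $\Psi_n^*(\hat\beta_n)=0$ by construction and $\zeta$ has the well-separated unique zero $\beta_0$ by condition (iv), the standard Z-estimator consistency argument then delivers $\hat\beta_n\mathop{\to}\limits^{p}\beta_0$.

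The main obstacle I anticipate is precisely the uniform control of $\Psi_n^*-\Psi_n$: unconditionally the sampled points are dependent, since their common law is driven by the data-dependent weights $\pi_i$, so the usual i.i.d. empirical-process machinery does not apply verbatim. Conditioning on the data restores independence for each fixed $\beta$ and suffices for pointwise convergence, but recovering uniformity then requires either a conditional maximal inequality under this unfamiliar dependence or the convexity shortcut above, the latter being the route I expect to be both cleanest and most robust.
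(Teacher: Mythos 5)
Your proposal is correct and follows essentially the same route as the paper: pointwise convergence of $\Psi_n^*(\beta)$ to $\zeta(\beta)$ via Chebyshev's inequality with the conditional (given the data) variance bounded by $\frac{1}{rn^2}\sum_{i=1}^n E\bigl[\{b'(X_i^T\beta)-Y_i\}^2 x_{ij}^2/\pi_i\bigr]$ and condition (iii); uniform convergence over the compact $\mathcal{B}$ via a Lipschitz-in-$\beta$ envelope (mean value theorem plus condition (i), with $\frac{1}{r}\sum_i X_i^*X_i^{*T}/(n\pi_i^*)=O_p(1)$ from condition (ii)) yielding stochastic equicontinuity; and then the well-separated Z-estimator theorem (van der Vaart, Theorem 5.9) with condition (iv). The convexity shortcut you mention as an alternative for the sampling-noise term is a legitimate extra route the paper does not use, but your primary argument matches the paper's proof.
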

Condition (\ref{Con1}$a$) is satisfied for most of GLMs except for Poisson regression. For Poisson regression, our theory is still applicable if Condition (\ref{Con1}b) is satisfied.  Condition (\ref{Con3}) - (\ref{Con4}) are needed to apply a consistency theorem for M-estimators \citep{Van2000}. In particular, condition (\ref{Con3}) ensures the uniform convergence of $\Psi_n^*$ while condition (\ref{Con4}) is a common \emph{well-separated} condition for consistency proofs which is satisfied if $\zeta(\cdot)$ has unique minimizer and if, as we have assumed, the parameter space $\mathcal{B}$ is compact.

\subsection{Asymptotic Normality of $\hat{\beta}_n$}
To establish asymptotic normality of $\hat{\beta}_n$, we start with an important asymptotic representation.
\begin{lem}[Asymptotic linearity]\label{AL}
	  Assume the following conditions,
	\begin{enumerate}[(i)]
		\item $\Phi=E\left\{b^{''}(X^T\beta_0)XX^T\right\}$ is finite and non-singular.
		\item $\sum_{i=1}^{n}E\left\{\frac{b^{''}(X_i^T\beta_0)^2}{\pi_i}(x_{ik}x_{ij})^2\right\}=o(n^2r)$, for $1\le k,j \le p$.
		\item $b(x)$ is three-times continuously differentiable for every $x$ within its domain.
		\item Every second-order partial derivative of $\psi_{\beta}(x)$ w.r.t $\beta$ is dominated by an integrable function $\ddot{\psi}(x)$ independent of $\beta$ in a neighborhood of $\beta_0$.\label{AL3}
	\end{enumerate}
	If $\Psi_n^*(\hat{\beta}_n)=0$ for all large $n$ and if $\hat{\beta}_n$ is consistent for $\beta_0$, then 
	$$\Psi_n^*(\beta_0)=-\Phi(\hat{\beta}_n-\beta_0)+o_p(||\hat{\beta}_n-\beta_0||).$$	
\end{lem}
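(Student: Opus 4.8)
The plan is to treat $\hat{\beta}_n$ as a $Z$-estimator and linearize $\Psi_n^*$ about $\beta_0$, with the extra bookkeeping that is needed because the summands are no longer independent once we refuse to condition on $X_1^n$. Write the Jacobian of the reweighted score as
$$\dot{\Psi}_n^*(\beta) = \frac{1}{r}\sum_{i=1}^r \frac{b''(X_i^{*T}\beta)}{n\pi_i^*}\, X_i^* X_i^{*T},$$
which exists and has continuous entries since $b$ is three-times continuously differentiable (condition (iii)). First I would use that $\Psi_n^*(\hat{\beta}_n)=0$ for all large $n$ and apply the mean value theorem coordinatewise: for each component $k$ there is an intermediate point $\tilde{\beta}_k$ on the segment joining $\beta_0$ and $\hat{\beta}_n$ with
$$0 = \Psi_{n,k}^*(\beta_0) + \big(\text{$k$-th row of } \dot{\Psi}_n^*(\tilde{\beta}_k)\big)(\hat{\beta}_n-\beta_0).$$
Collecting these rows into a matrix $M_n$ gives the exact identity $\Psi_n^*(\beta_0) = -M_n(\hat{\beta}_n-\beta_0)$, so the lemma reduces to showing $M_n\mathop{\to}\limits^{p}\Phi$.

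The heart of the argument is the limit $\dot{\Psi}_n^*(\beta_0)\mathop{\to}\limits^{p}\Phi$. Conditionally on $\mathcal{F}_n:=\sigma(X_1^n,\text{pilot})$ the $r$ summands are i.i.d.\ over $i$, and each is, by the importance-sampling (Horvitz--Thompson) identity, conditionally unbiased for $\frac{1}{n}\sum_{j=1}^n b''(X_j^T\beta_0)X_jX_j^T$. The second-moment bound in condition (ii) forces the conditional variance of $\dot{\Psi}_n^*(\beta_0)$ to vanish (the $1/(n\pi_i^*)$ reweighting produces exactly the $o(n^2 r)$ quantity controlled there), so a conditional law of large numbers gives $\dot{\Psi}_n^*(\beta_0) - \frac{1}{n}\sum_{j=1}^n b''(X_j^T\beta_0)X_jX_j^T\mathop{\to}\limits^{p}0$. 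The ordinary law of large numbers sends the empirical average to $\Phi=E\{b''(X^T\beta_0)XX^T\}$, which is finite by condition (i), yielding $\dot{\Psi}_n^*(\beta_0)\mathop{\to}\limits^{p}\Phi$.

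To pass from $\beta_0$ to the random intermediate points I would invoke the domination condition (iv). Since every second-order partial of $\psi_\beta$ is bounded by the integrable envelope $\ddot{\psi}$, a mean value bound on the first-order partials gives, for $\beta$ in a neighborhood of $\beta_0$,
$$\big\|\dot{\Psi}_n^*(\beta)-\dot{\Psi}_n^*(\beta_0)\big\| \le \Big(\frac{1}{r}\sum_{i=1}^r \frac{\ddot{\psi}(X_i^*)}{n\pi_i^*}\Big)\,\|\beta-\beta_0\|,$$
and the bracketed average is $O_p(1)$ by the same Horvitz--Thompson-plus-LLN reasoning applied to $\ddot{\psi}$. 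Because each $\tilde{\beta}_k$ is trapped between $\beta_0$ and the consistent $\hat{\beta}_n$, we have $\|\tilde{\beta}_k-\beta_0\|\mathop{\to}\limits^{p}0$, hence $\dot{\Psi}_n^*(\tilde{\beta}_k)-\dot{\Psi}_n^*(\beta_0)\mathop{\to}\limits^{p}0$ and therefore $M_n\mathop{\to}\limits^{p}\Phi$. Finally I would rewrite $\Psi_n^*(\beta_0) = -\Phi(\hat{\beta}_n-\beta_0) - (M_n-\Phi)(\hat{\beta}_n-\beta_0)$ and bound the last term by $\|M_n-\Phi\|_{\mathrm{op}}\,\|\hat{\beta}_n-\beta_0\| = o_p(\|\hat{\beta}_n-\beta_0\|)$, which is exactly the claimed representation.

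I expect the main obstacle to be establishing $M_n\mathop{\to}\limits^{p}\Phi$ honestly, without conditioning on the data in the final statement: one must marry the conditional (sampling-level) law of large numbers to the unconditional law of large numbers over $X_1^n$, and check that the variance-controlling moments in condition (ii) indeed survive the $1/(n\pi_i^*)$ reweighting. The domination step (iv) is what allows the convergence at $\beta_0$ to be transported to the data-dependent evaluation points $\tilde{\beta}_k$; without an integrable envelope the intermediate-point Jacobian could fail to converge, and the linearization would break down.
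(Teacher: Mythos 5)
Your proposal is correct and follows essentially the same route as the paper: the key step in both is showing $\dot{\Psi}_n^*(\beta_0)\mathop{\to}\limits^{p}\Phi$ via conditional unbiasedness plus a Chebyshev variance bound from condition (ii), with condition (iv) and Markov's inequality controlling the remainder. The only cosmetic difference is that you apply the mean value theorem coordinatewise at first order and transport the Jacobian limit to the intermediate points via a Lipschitz bound, whereas the paper uses a second-order Taylor expansion at $\beta_0$ and shows the second-derivative term is $O_p(1)$; both uses of condition (iv) are interchangeable.
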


The proof of Lemma \ref{AL} is based on the asymptotic linearity of M-estimators, e.g., Theorem~5.41 in \cite{Van2000}. It turns out to be important for the establishment of asymptotic normality.

We will apply a multivariate martingale central limit theorem (Lemma \ref{MCLT} in the supplementary materials) to the above asymptotic linear representation and show the asymptotic normality of $\hat{\beta}_n$. 

We first define a filtration $\{\mathcal{F}_{n,i}\}_{i=1}^{r(n)}$ adaptive to our sampling procedure: $\mathcal{F}_{n,0}=\sigma(X_1^n,Y_1^n)$; $\mathcal{F}_{n,1}=\sigma(X_1^n,Y_1^n)\lor\sigma(*_1)$; $\cdots$; $\mathcal{F}_{n,i}=\sigma(X_1^n,Y_1^n)\lor\sigma(*_1)\lor\cdots \lor \sigma(*_i)$; $\cdots$, where $\sigma(*_i)$ is the $\sigma$-algebra generated by $i$th sampling step, which can be interpreted as the smallest $\sigma$-algebra that contains all the information in $i$th step. In the following, we always assume subsample size $r$ is increasing with $n$. Based on the filtration, we define the martingale 
\begin{align*}
M&:=\sum\limits_{i=1}^{r}M_i:= \sum\limits_{i=1}^{r}\left[ \frac{b'(X_i^{*T}\beta_0)-Y_i^*}{rn\pi_i^*}\cdot X_i^*-\frac{1}{rn}\sum\limits_{j=1}^n\{b'(X_j^T\beta_0)-Y_j\}\cdot X_j\right],
\end{align*}
where $\{M_i\}_{i=1}^r$ is a martingale difference sequence adapt to $\{\mathcal{F}_{n,i}\}_{i=1}^{r}$. In addition, we define: $Q:=\frac{1}{n}\sum_{j=1}^n(b'(X_j^T\beta_0)-Y_j)\cdot X_j$; $
T:=\Psi_n^*(\beta_0)=M+Q$; $
\xi_{ni}:=V(T)^{-\frac{1}{2}}M_i$; $
B_n:= V(T)^{-\frac{1}{2}}V(M)V(T)^{-\frac{1}{2}}$, which is the variance of the normalized martingale $V(T)^{-\frac{1}{2}}M$.

\begin{thm}[Asymptotic normality of $\hat{\beta}_n$]\label{AN}
	Under the conditions in Lemma \ref{AL} and we further assume
	\begin{enumerate}[(i)]
		\item $\lim\limits_{n\to \infty} \sum\limits_{i=1}^rE[||\xi_{ni}||^4]=0$, \label{AN1}
		\item $\lim\limits_{n\to \infty}E[||\sum\limits_{i=1}^{r}E[\xi_{ni}\xi_{ni}^T|\mathcal{F}_{n,i-1}]-B_n||^2]=0$, \label{AN2}
	\end{enumerate}
	we have
	$$V(T)^{-\frac{1}{2}}\Phi(\hat{\beta}_n-\beta_0)\mathop{\longrightarrow}\limits^d N(0,I). $$	
\end{thm}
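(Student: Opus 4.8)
The plan is to feed the asymptotic linear representation of Lemma~\ref{AL} into the multivariate martingale central limit theorem (Lemma~\ref{MCLT}), and then to separately account for the sampling-independent randomness carried by $Q$. First I would invert the representation. Since $\Phi$ is nonsingular, Lemma~\ref{AL} gives $\Phi(\hat{\beta}_n-\beta_0)=-\Psi_n^*(\beta_0)+o_p(\|\hat{\beta}_n-\beta_0\|)=-T+o_p(\|T\|)$, where $T=\Psi_n^*(\beta_0)=M+Q$. Left-multiplying by $V(T)^{-\frac12}$ and using that $V(T)^{-\frac12}T$ will be shown to be $O_p(1)$, the remainder $V(T)^{-\frac12}o_p(\|T\|)$ is $o_p(1)$ provided $V(T)$ has an asymptotically bounded condition number (which follows from the variance conditions). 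This reduces the claim to showing $V(T)^{-\frac12}T\xrightarrow{d}N(0,I)$.

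Next I would establish the two structural facts that make the decomposition $T=M+Q$ tractable. Because the sampling is response-free and drawn i.i.d.\ given $\mathcal{F}_{n,0}=\sigma(X_1^n,Y_1^n)$, and because the inverse-probability weight $1/(n\pi_i^*)$ renders each sampled summand conditionally unbiased for $Q$, one checks $E[M_i\mid\mathcal{F}_{n,i-1}]=0$, so $\{M_i\}$ is indeed a martingale difference sequence. Since $Q$ is $\mathcal{F}_{n,0}$-measurable and $E[M\mid\mathcal{F}_{n,0}]=0$, we get $\mathrm{Cov}(M,Q)=0$, hence $V(T)=V(M)+V(Q)$ and therefore $B_n=I-V(T)^{-\frac12}V(Q)V(T)^{-\frac12}$. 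This identity is what will force the total limiting variance to equal $I$.

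I would then invoke two limit theorems and combine them. Applying Lemma~\ref{MCLT} to $\xi_{ni}=V(T)^{-\frac12}M_i$, with condition (\ref{AN1}) supplying the Lyapunov bound and condition (\ref{AN2}) the conditional-covariance convergence, yields that (conditionally on the data) $V(T)^{-\frac12}M$ is asymptotically $N(0,B_n)$. Separately, at $\beta_0$ the canonical GLM satisfies $E[Y\mid X]=b'(X^T\beta_0)$, so $\zeta(\beta_0)=E[\{b'(X^T\beta_0)-Y\}X]=0$; thus $Q$ is a normalized sum of i.i.d.\ mean-zero vectors and the classical CLT gives $V(T)^{-\frac12}Q$ asymptotically $N\!\big(0,V(T)^{-\frac12}V(Q)V(T)^{-\frac12}\big)$. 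I would combine these via Cram\'er--Wold: for a fixed unit vector $t$, write $t^TV(T)^{-\frac12}T=U_n+W_n$ with $W_n=t^TV(T)^{-\frac12}Q$ being $\mathcal{F}_{n,0}$-measurable, and condition the characteristic function, $E[e^{\imath sU_n}\mid\mathcal{F}_{n,0}]\cdot e^{\imath sW_n}$. The conditional factor concentrates (by (\ref{AN2})) at $\exp(-\tfrac{s^2}{2}\,t^TB_nt)$ and the $W_n$ factor contributes the complementary $\exp(-\tfrac{s^2}{2}\,t^T(I-B_n)t)$, so the variances add to $t^TIt$ and the product converges to $e^{-s^2/2}$, giving $N(0,I)$.

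The hard part will be the final combination, not the two individual CLTs. The martingale part $M$ and the data part $Q$ live at genuinely different scales ($V(M)=O(1/r)$ versus $V(Q)=O(1/n)$) and are not independent, since the conditional variance of $M$ is itself a function of $\mathcal{F}_{n,0}$; mere zero correlation is insufficient for joint asymptotic normality. The crux is to upgrade $\mathrm{Cov}(M,Q)=0$ to asymptotic independence by using condition (\ref{AN2}) to show that the \emph{random} conditional variance $A_n=V(T)^{-\frac12}V(M\mid\mathcal{F}_{n,0})V(T)^{-\frac12}$ concentrates on the \emph{deterministic} $B_n$, so that the conditional Gaussian limit of $M$ does not fluctuate with the data and cleanly factorizes away from the fluctuations of $Q$ in the characteristic function. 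Controlling this concentration, rather than any single application of the martingale CLT, is where the essential work of the unconditional analysis lies.
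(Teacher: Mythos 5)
Your proposal is correct and follows essentially the same route as the paper: asymptotic linearity from Lemma~\ref{AL}, the decomposition $T=M+Q$ with $V(T)=V(M)+V(Q)$, the martingale CLT for the $M$-part, the classical CLT for $Q$, and a characteristic-function combination via Cram\'er--Wold. The ``hard part'' you identify is exactly what the paper's Lemma~\ref{MCLT} (the Ohlsson-style convolution condition $L(Y_k)*N(0,B_k)\to L_0$) together with the technical Lemma~\ref{cd3} is designed to resolve.
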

 Condition (\ref{AN1}) and (\ref{AN2}) are martingale version integrability conditions similar to  Lindeberg-Feller conditions. In fact, such conditions are common in martingale central limit theorems \citep{hall2014martingale}.

\subsection{Optimal Sampling Weights under Measurement Constraints}
In this section, we will derive the A-optimal sampling distribution for our general response-free sampling procedure. 

Theorem \ref{AN} shows that for sufficiently large subsample size, the distribution of $\hat{\beta}_n-\beta_0$ can be well approximated by $N\left(0,\mathbf{V} \right)$ with $\mathbf{V}:=\Phi^{-1} V(T)\Phi^{-1}$. If $\beta_0$ is univariate, we can optimize the sampling probability by minimizing the asymptotic variance $\Phi^{-1} V(T)\Phi^{-1}$ which results in highest statistical efficiency. For multi-dimensional $\beta_0$, 
we adopt the A-optimality criterion of experiment design \citep{Kiefer1959} and minimize the trace of the covariance matrix. Minimization of the trace of $\mathbf{V}$, i.e. ${\rm tr}(\Phi^{-1} V(T)\Phi^{-1} )$, is equivalent to minimization of the asymptotic mean squared error. The following theorem specifies A-optimal sampling distribution. 
\begin{thm}\label{thm2}
	If for $1\le j \le n$, the sampling probability is 
	$$\pi_j =\frac{\sqrt{b^{''}(X^T_j\beta_0)}||\Phi^{-1}X_j||}{\sum_{i=1}^n\sqrt{b^{''}(X^T_j\beta_0)}||\Phi^{-1}X_j||} ,$$
	then ${\rm tr}(\mathbf{V})$ will attain its minimum, i.e., $\{\pi_j\}_{j=1}^n$ is the A-optimal sampling distribution. 
\end{thm}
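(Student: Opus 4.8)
The plan is to strip ${\rm tr}(\mathbf{V}) = {\rm tr}(\Phi^{-1}V(T)\Phi^{-1})$ down to its dependence on $\pi$, reduce that dependence to a scalar objective of the form $\sum_j a_j/\pi_j$, and minimize it under the constraint $\sum_j\pi_j=1$ by Cauchy--Schwarz. Because $\Phi$ is a population quantity free of $\pi$, everything reduces to understanding how $V(T)$ depends on the sampling weights.

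First I would split $V(T)$. Writing $T=M+Q$, where $Q$ is $\mathcal{F}_{n,0}$-measurable and $\{M_i\}$ is a martingale difference sequence with $E[M\mid\mathcal{F}_{n,0}]=0$, the cross term ${\rm Cov}(M,Q)$ vanishes and hence $V(T)=V(M)+V(Q)$, with $V(Q)$ free of $\pi$. I would then evaluate $V(M)$ via the law of total variance, $V(M)=E[V(M\mid\mathcal{F}_{n,0})]$, using that conditionally on $\mathcal{F}_{n,0}$ the $r$ sampling steps are i.i.d.\ draws from $\{\pi_j\}$, so the conditional variance is additive across steps. A direct conditional second-moment calculation yields
$$V(M\mid\mathcal{F}_{n,0})=\frac{1}{r}\left[\frac{1}{n^2}\sum_{j=1}^n\frac{\{b'(X_j^T\beta_0)-Y_j\}^2}{\pi_j}X_jX_j^T-QQ^T\right],$$
and the $QQ^T$ term does not involve $\pi$.

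The decisive step, and the one that makes the optimal weights computable under measurement constraints, is to integrate out the responses. Because $\pi_j$ is response-free (a function of $X_1^n$ only), I would condition each summand on $X_1^n$ and replace $\{b'(X_j^T\beta_0)-Y_j\}^2$ by its conditional mean $V(Y_j\mid X_j)=b''(X_j^T\beta_0)$, the canonical-link GLM variance identity. Thus the only $\pi$-dependent contribution to $V(T)$ is $\frac{1}{rn^2}\sum_j E[b''(X_j^T\beta_0)X_jX_j^T/\pi_j]$. Conjugating by $\Phi^{-1}$, taking the trace, and using ${\rm tr}(\Phi^{-1}X_jX_j^T\Phi^{-1})=\|\Phi^{-1}X_j\|^2$ collapses the matrix problem to minimizing $\sum_j a_j/\pi_j$ with $a_j=b''(X_j^T\beta_0)\|\Phi^{-1}X_j\|^2$.

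Finally, I would minimize $\sum_j a_j/\pi_j$ subject to $\sum_j\pi_j=1$. Since both the constraint and the coefficients $a_j$ are determined by $X_1^n$, the minimization can be carried out pointwise in the data. Cauchy--Schwarz gives $(\sum_j a_j/\pi_j)(\sum_j\pi_j)\ge(\sum_j\sqrt{a_j})^2$, with equality if and only if $\pi_j\propto\sqrt{a_j}$; this yields $\pi_j\propto\sqrt{b''(X_j^T\beta_0)}\,\|\Phi^{-1}X_j\|$ and, after normalization, the stated A-optimal distribution. I expect the main obstacle to be the bookkeeping in the conditional variance computation---verifying the conditional independence of the sampling steps, the vanishing of the cross term, and the interchange of expectation with the pointwise minimization---rather than the closing Cauchy--Schwarz argument, which is routine.
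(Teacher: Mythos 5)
Your proposal is correct and follows essentially the same route as the paper: both isolate the $\pi$-dependent part of $V(T)$ via a law-of-total-variance computation conditional on the data, integrate out the responses using $E[\{b'(X_j^T\beta_0)-Y_j\}^2\mid X_j]=b''(X_j^T\beta_0)$, reduce the trace to $\sum_j b''(X_j^T\beta_0)\|\Phi^{-1}X_j\|^2/\pi_j$ plus $\pi$-free terms, and close with Cauchy--Schwarz under $\sum_j\pi_j=1$, with the pointwise-in-$X_1^n$ minimization justifying the interchange with the outer expectation. The only cosmetic difference is that you first split $V(T)=V(M)+V(Q)$ before conditioning, whereas the paper conditions on $X_1^n$ directly and splits $V(T\mid X_1^n)$ into $E_Y[V(T\mid X_1^n,Y_1^n)]+V_Y[E(T\mid X_1^n,Y_1^n)]$; the resulting expressions are identical.
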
 
The optimal weights cannot be calculated directly in practice, since they depend on population level quantities $\Phi^{-1}$ and $\beta_0$. Therefore, to implement response-free sampling, we need pilot estimates of $\Phi$ and $\beta_0$. The details are shown in Algorithm \ref{ts}. 
\begin{algorithm}[H]
	\begin{enumerate}
	\medskip
		\item Uniformly sample $r_0$ ($\ll$ $ r$) data points with indices $i_1,\dots,i_{r_0}$ and collect those points: $\{(X_{i_j},Y_{i_j})\}_{j=1}^{r_0}$ from data pool. Calculate $\tilde{\beta}_n$, the pilot estimator of $\beta_0$, and  $\tilde{\Phi}_n:=\frac{1}{r_0}\sum\limits_{j=1}^{r_0} b^{''}(X_{i_j}^T\tilde{\beta}_n)X_{i_j}X_{i_j}^T$,  the pilot estimator of $\Phi$, based on the $r_0$ data points.
		\item Calculate the approximate optimal sampling weight for each data point: 
		$$\pi_j \propto \sqrt{b^{''}(X^T_j\tilde{\beta}_n)}||\tilde{\Phi}_n^{-1}X_j||,$$
		for $1\le j\le n$.
		\item Run Algorithm \ref{A1} with $\pi_j$ defined above to obtain the final estimator $\hat{\beta}_n$
	\end{enumerate}
	\caption{Optimal Sampling under Measurement Constraint (OSUMC)}\label{ts}
\end{algorithm}

\textbf{Remarks}
\begin{itemize}
	\item Step 1 in Algorithm \ref{ts} is designed for pilot estimation when very few or even none of the responses are available initially, but expensive responses collection is possible. If a moderate number of responses are accessible in the initial data pool, we could use them to calculate pilot estimators.
	Otherwise, a small random sample can be taken with uniform sampling probabilities.   The size of the pilot sample, $r_0$ could be fairly small compared with total sample size $n$ and even of the same order of magnitude as the dimension $p$ of the problem. In our empirical study, we set $r_0=500$ with the complete dataset of size $n=10^5$ and $p$ ranging from $20$ to $100$, and the performance of Algorithm \ref{ts} is satisfactory. 
	
	\item (Computational complexity) When we use Newton's method, or one of its variants such as Fisher scoring, to compute the root of the score equation, the computation requires $O(\zeta n p^2)$ computational time, where $\zeta$ is the number of iterations needed for the algorithm to converge. In our empirical study, $\zeta$ varies from $10$ to $30$ under different models. For OSUMC algorithm the first step requires $O(\zeta_1 r_0p^2)$ computation time where $\zeta_1$ is the number of iterations. In the second step, $O(np^2+\zeta_2rp^2)$ computation time is required where $\zeta_2$ is the number of the iterations in this step.
	 Hence, OSUMC algorithm has complexity of order $O(np^2+\zeta_1 r_0p^2+\zeta_2rp^2)$. If $n$ is extremely large such that $p$, $r_0$, $r$, $\zeta_1$ and $\zeta_2$ are all much smaller than $n$, the computation complexity of the algorithm is $O(np^2)$. Therefore,  the computational advantage of using OSUMC algorithm compared with full-sample MLE is still huge if the scale of the problem is large, i.e., $np^2$ is large and $\zeta >10$. The intensive numerical study in the following provides strong evidence for such advantage. To further reduce the computational complexity, one may use modified Newton's method for large-scale computation; see, for example, \cite{xu2016sub}.
\end{itemize}

\section{Numerical Examples}\label{sr}
\subsection{Simulation Results}
In this section, we evaluate the performance of the OSUMC algorithm on synthetic datasets. Due to page limitation, we will show the numerical results for logistic and linear regression and defer Poisson regression to Section \ref{Poisson} in the supplementary material. All the results are obtained in the R environment with one Intel Xeon processor and 8 Gbytes RAM over Red Hat OpenStack Platform.

\subsubsection{Logistic Regression}
We generate datasets of size $n=100,000$ from the  logistic regression model,
$$P(Y=1|X,\beta_0)=\frac{\exp(X^T\beta_0)}{1+\exp(X^T\beta_0)},$$
where $\beta_0$ is a $20$ dimensional vector with all entries $1$. We consider four different scenarios to generate $X$ as in \cite{Wang2017}. 
\begin{itemize}
	\item \textbf{mzNormal.} $X$ follows the multivariate normal distribution $N(0,\Sigma)$ with $\Sigma_{ij}=0.5^{I(i\ne j)}$. In this case, we have a balanced dataset, i.e., the number of 1's and the number of 0's in the responses are almost equal. 
	\item  \textbf{nzNormal.} $X$ follows the multivariate normal distribution $N(0.5,\Sigma)$. In this case, we have an imbalanced dataset where about $75\%$ of the responses are 1's.
	\item \textbf{unNormal.} $X$ follows the multivariate normal distribution with mean zero but different variances. To be more specific, $X$ follows the multivariate normal distribution $N(0,\Sigma_1)$ with $\Sigma_1=U_1\Sigma U_1$, where $U_1=diag(1,1/2,\dots,1/20). $
	\item \textbf{mixNormal} $X \sim 0.5N(0.5,\Sigma)\ +\  0.5N(-0.5,\Sigma)$.
\end{itemize}
In each case, we compare our optimal sampling procedure (OSUMC) with the mMSE method in \cite{Wang2017} (OSMAC), uniform sampling (Unif), and the benchmark full data MLE under different subsample sizes ($r$). In our procedure, we set the subsample size of uniform sampling in the first step equal to $r_0=500$. For uniform sampling, we directly subsample $r$ points and calculate the subsample MLE.

We repeat simulations $S=500$ times, and calculate the empirical MSE as  $S^{-1}\sum_{s=1}^S||\hat{\beta}_n^{(s)}-\beta_0||^2$ where $\hat{\beta}_n^{(s)}$ is the estimate from the $s$th repetition. The comparison of the empirical MSE is presented in Figure \ref{logmse}.

\begin{figure}[h]
	
	\centering
	\includegraphics[scale=0.52]{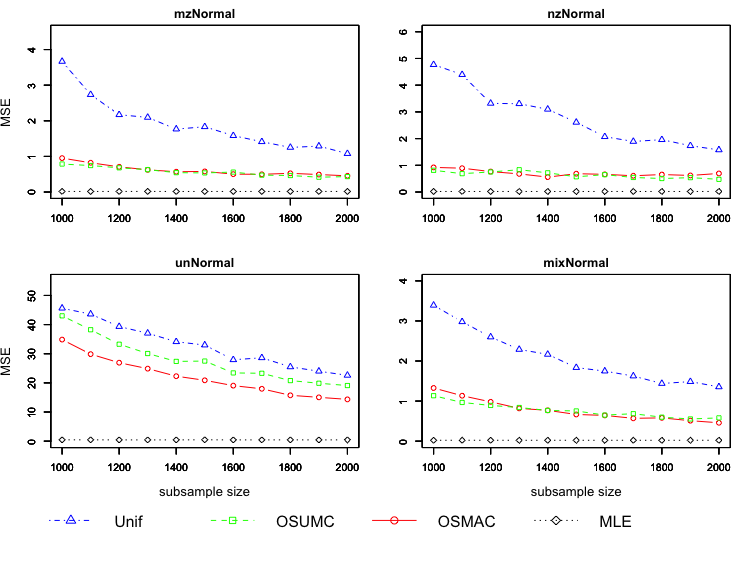}
	\caption{MSE of the proposed optimal sampling procedure (OSUMC), the method in \cite{Wang2017} (OSMAC), the uniform sampling (Unif), and the full sample MLE (MLE) for different subsample size $r$ under four scenarios in logistic regression.}
	\label{logmse}
\end{figure}

From Figure \ref{logmse}, both OSUMC method and the OSMAC method in \cite{Wang2017} uniformly dominate the uniform sampling method in all four scenarios, which agrees with Theorem \ref{thm2}. In most of the simulation settings (except for unNormal), our sampling procedure performs similarly to the OSMAC in \cite{Wang2017}. This is because both methods adopt the A-optimality criterion in respective framework to derive the sampling weights. However, we note that OSMAC requires the response of each data point, which is infeasible under measurement constraints, while our method can be implemented as long as a moderate number of responses are available for the pilot estimators.

We also compare the average computational time for each method under all scenarios and the computational time plot can be found in Section \ref{logtm} in the supplementary materials. Our simulation reveals that the computation time is not very sensitive to the subsample size for all the four methods. In most cases, OSUMC and OSMAC perform similarly and require significantly less computational time compared with the full-sample MLE. 

To see whether the asymptotic normality in our theory holds under the previous four different design generation settings, we plot the chi-square Q-Q plot of the resultant estimator $\hat{\beta}_n$ based on $1000$ simulations with fixed subsample size $r=5000$ for each considered setting. The plots are presented in the supplementary materials, Section \ref{logqqs}. Q-Q plots reveal that the resultant sampling estimator $\hat{\beta}_n$ is approximately normal with sufficiently large sample size $n$ and subsample size $r$ in the four considered design generation settings.

\subsubsection{Linear Regression}
We generate datasets of size $n=100,000$ and dimension $p=100$ from the following linear regression model: $Y=X\beta_0+\epsilon$, 
where $\beta_0=(\underbrace{0.1,\dots,0.1}_{5},\underbrace{10,\dots,10}_{90},\underbrace{0.1,\dots,0.1}_{5})^T$ and $\epsilon\sim N(0,9I_n)$. 

We note that in linear regression model, OSUMC algorithm is equivalent to the following Algorithm \ref{lralgo}. Algorithm \ref{lralgo} is similar to the general least-squares sampling meta-algorithm in \cite{Ma2015}, which is adopted in \cite{Drineas2006,Drineas2011,Drineas2012}. 
\begin{algorithm}[H]
	
	\medskip
	\begin{enumerate}
		\item Uniformly sample $r_0(\ll r)$ data points: $\{(X_{i_j},Y_{i_j})\}_{j=1}^{r_0}$ . Calculate\\ $\tilde{\Phi}_n:=\frac{1}{r_0}\sum\limits_{j=1}^{r_0} X_{i_j}X_{i_j}^T$,  the pilot estimator of $\Phi$.
		\item Calculate the approximate optimal sampling weight for each data point: 
		$$\pi_j \propto ||\tilde{\Phi}_n^{-1}X_j||,$$
		for $1\le j\le n$. 
		\item Repeat sampling $r$ times according to probability in step $2$ and rescale each sampled data point $(X_i^*,Y_i^*)$ by $1/\sqrt{\pi_i^*}$, $1\le i \le r$.
		\item Calculate the ordinary least-squares estimator of the rescaled subsample and output it as the final estimator.
	\end{enumerate}
	\caption{Optimal Sampling for Linear Regression}\label{lralgo}
\end{algorithm}
We consider the following design generation settings from \cite{Ma2015}. Similar settings are also investigated in \cite{WangY2017}.
\begin{enumerate}
	\item \textbf{GA.} The $n\times p$ design matrix $\mathbf{X}$ is generated from multivariate normal $N(1_p, \Sigma_2)$ with $\Sigma_2=U_2\Sigma U_2$, where  $U_2={\rm diag}(5,5/2,\dots,5/30) $.  
	\item  $\mathbf{T_3.}$ Design matrix $\mathbf{X}$ is generated from multivariate t-distribution with 3 degrees of freedom and covariance $\Sigma_2$ as GA.
	\item  $\mathbf{T_1.}$ Design matrix $\mathbf{X}$ is generated from multivariate t-distribution with 1 degrees of freedom and covariance $\Sigma_2$ as GA.
\end{enumerate} 

We compare our method (OSUMC) with uniform sampling (Unif), leverage sampling (Leverage) and shrinkage leveraging (SLEV) in \cite{Ma2015} over different subsample size in the three design generation settings above. For the SLEV method, the shrinkage parameter $\alpha$ is set to be $0.9$ as in \cite{Ma2015}. Again, we repeat the simulation $500$ times and report the empirical MSE and computational time in Figures \ref{lrmse} and \ref{lrtime}, respectively.  

\begin{figure}[h]
	\centering
	\includegraphics[scale=0.42]{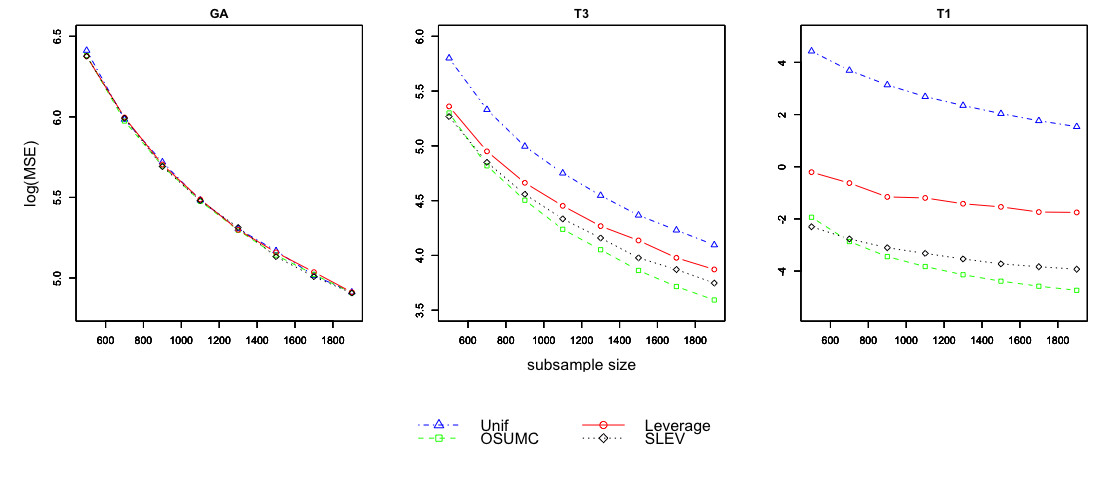}
	\caption{MSE plots for different subsample size $r$ under different design generation settings for linear regression}
	\label{lrmse}
\end{figure}
For all three design generation settings, our method always results in smaller MSE than the other three methods, which again is consistent with our theoretical results. The advantage of our method becomes more obvious when the design generation distribution is more heavy-tailed by noting that $GA$ has moments of arbitrary orders while $T_k$ only has moments up to order $k-1$. It is interesting to see that our method outperforms  other methods even in the $T_1$ design setting where the moment assumptions imposed in our theory are violated. The performance of both leverage sampling and shrinkage leverage sampling improves with heavier-tailed design generation distributions. This has been well understood in the literature on leverage sampling and outlier diagnosis~\citep{Rousseeuw2011,Ma2015}. As expected, all the three sampling methods above yield smaller MSEs than uniform sampling.

The average computational times of the four methods are reported in Figure \ref{lrtime} in the supplementary materials. Again, the results show the insensitivity of the computational time to increasing subsample sizes. Our method requires the second smallest computing time, being inferior only to the uniform sampling. Both leverage related methods take more than double the computational time of our method due to the intensive computation of leveraging score of each data point.

Again, we provide Q-Q plots of the resultant estimator $\hat{\beta}_n$ for each considered design setting and the results can be found in Section \ref{lrqqs} in the supplementary materials.

\subsection{Superconductivty Dataset}
In this section, we analyze the superconductivty dataset \citep{Hamidieh2018}, which is available from the Machine Learning Repository at: \url{https://archive.ics.uci.edu/ml/datasets/Superconductivty+Data#}. The purpose of \cite{Hamidieh2018} is to build a statistical model to predict the superconducting critical temperature of superconducting materials based on their chemical formulas. In the dataset, 21,263 different superconductors' critical temperatures are collected along with $81$ features extracted from the chemical formulas the superconductors. Multiple linear model is considered in \cite{Hamidieh2018} and regression coefficients are calculated based on the full sample, which is treated as ``true" parameters ($\beta_0$) in the following analysis. 

We compare our sampling method (OSUMC) with the three other sampling methods in the simulation studies for linear models. Besides the estimation accuracy which is the main focus before, prediction performance of the sampling algorithm will also be evaluated. We randomly select 19,000 data points as the training set and use the rest as the test set for prediction purpose. Then we implement the sampling method on the training dataset and obtain the coefficient estimator $\hat{\beta}$. We now measure the estimation and prediction performance by \emph{relative mean squared error}: $||\hat{\beta}-\beta_0||^2/||\beta_0||^2$ and \emph{prediction relative squared error}: $||X\hat{\beta}-Y||^2/||X\beta_0-Y||^2$ which is calculated over test dataset, respectively. We repeat the process $500$ times for different subsample sizes and the median of the two criteria are recorded for each subsample size. The results are presented in Figure \ref{rl}. We also report the median running times of the four sampling methods over different subsample sizes in Figure \ref{rl_t}.

\begin{figure}
	\begin{subfigure}{0.55\textwidth}
		\centering
		\includegraphics[scale=0.45]{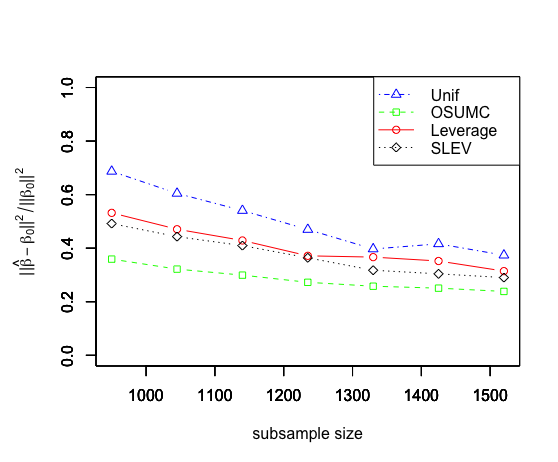}
		\caption{Estimation relative SE}
		\label{rl_1}
	\end{subfigure}%
	\begin{subfigure}{0.55\textwidth}
		\centering
		\includegraphics[scale=0.45]{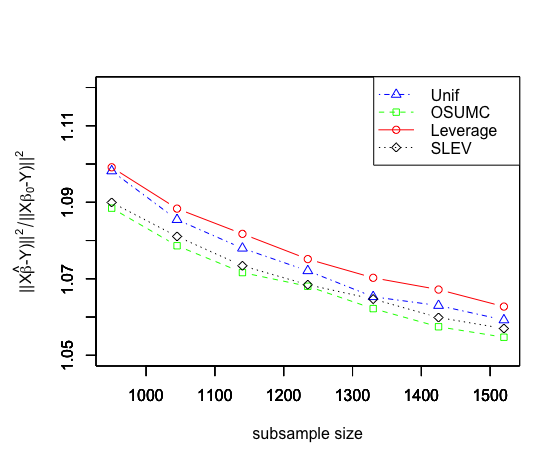}
		\caption{Prediction relative SE}
		\label{rl_2}
	\end{subfigure}
	\caption{Estimation and prediction performance comparison of four different sampling methods over different subsample size}\label{rl}
\end{figure}

Figure \ref{rl} shows that our method consistently achieves the best performance on both estimation and prediction. In addition, our sampling procedure outperforms both leverage-based methods in computational time. Though SLEV method achieves similar prediction accuracy as our method, it takes more than twice the computational time. Therefore, the proposed sampling procedure maintains a good balance between statistical efficiency and computational cost.

\section{Conclusion}\label{con}
In this paper, we propose a novel sampling procedure, OSUMC, to address measurement constraints when estimating GLMs. We show unconditional asymptotic normality of a general class of response-free sampling estimators by using newly developed martingale techniques. Our unconditional asymptotic results obtained without conditioning on the data are different from existing conditional results which condition on both covariates and responses. 
 Building on the asymptotic theory, we derive the A-optimal sampling distribution, which depends on population level quantities. For practical applications, we propose OSUMC algorithm to approximate the theoretical optimal sampling scheme. Additionally, we conduct extensive numerical studies which show that the performance of OSUMC is comparable to that of sampling algorithms which use complete responses to calculate sampling probabilities. This indicates that OSUMC successfully prevents the loss of statistical efficiency due to measurement constraints. 
 


A number of interesting extensions and open areas remain. For example, it would be interesting to consider the high-dimensional scenario where the dimension $p$ could be much larger than the subsample size $r$. Under such setting, techniques like regularization and debiasing in the high-dimensional statistics would be need which is out of the scope of this paper. We will leave it for future investigations. 

\section*{Supplementary Materials}
\textbf{Appendices} A supplementary PDF file with proofs of all the theoretical results in Section 4 and additional plots and simulation results for Section 5. \\
\textbf{R code}  Code for the simulations and real data analysis in Section 5.\\
\textbf{Dataset}  The Superconductivity dataset used for real data analysis.

\section*{Acknowledgements}
Ning was supported in part by NSF grant DMS-1854637.	
\bibliographystyle{apa}
\bibliography{Bibs2}

\setcounter{page}{1}
\setcounter{section}{0}
\setcounter{figure}{0}
\setcounter{table}{0}

\clearpage
\def\theequation{A\arabic{section}.\arabic{equation}}
\renewcommand{\thesection}{A\arabic{section}}   
\renewcommand{\thetable}{A\arabic{table}}   
\renewcommand{\thefigure}{A\arabic{figure}}
 \begin{center}
 {\Large\bf Supplementary Materials for ``Optimal Sampling for
Generalized Linear Models under Measurement Constraints"}
\vspace{.25cm}

{\large Tao Zhang, Yang Ning and David Ruppert}
\vspace{.4cm}

Department of Statistics and Data Science, Cornell University
\end{center}
\vspace{.55cm}

\section{Extra Notation}
 For a matrix $A\in \mathbb{R}^{p\times p}$, $\lambda_{\rm max}(A)$ denotes the largest eigenvalue of matrix $A$ and $||A||$ is the Frobenius norm of $A$. For two positive definite matrices, $B$ and $C$, we write $B>C$ if and only if $B-C$ is positive definite (p.d.). In particular, $B$ is p.d. if and only if $B > 0$.  

\section{Proof of Theorem \ref{Con}}
To show the consistency of $\hat{\beta}_n$, we will need the following lemma. 
\begin{lem}\label{Conlm1}
	For any $\beta\in \mathcal{B}$, if $\zeta(\beta):=E\left[\{b'(X^T\beta)-Y\}X\right]$ is finite and the following condition holds
	\begin{equation}\label{as1}
		\sum\limits_{i=1}^n E\left[ \frac{\{b'(X_i^T\beta)-Y_i\}^2}{\pi_i} x_{ij}^2\right]=o(n^2r),   
	\end{equation}
	then we have $ \Psi_n^*(\beta)-\zeta(\beta) \mathop{\longrightarrow}\limits^p 0$ for any $\beta\in \mathcal{B}$.
\end{lem}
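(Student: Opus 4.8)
The plan is to split the target into a sampling fluctuation around the full-data score plus a classical average. Write $g_\beta(X,Y):=\{b'(X^T\beta)-Y\}X$, so that $\zeta(\beta)=E[g_\beta(X,Y)]$ and the full-sample score is $\Psi_n(\beta)=\tfrac1n\sum_{j=1}^n g_\beta(X_j,Y_j)$. I would decompose
$$\Psi_n^*(\beta)-\zeta(\beta)=\big\{\Psi_n^*(\beta)-\Psi_n(\beta)\big\}+\big\{\Psi_n(\beta)-\zeta(\beta)\big\},$$
and argue that each bracket tends to $0$ in probability, coordinate by coordinate; the vector statement then follows. The second bracket is immediate from the weak law of large numbers: $\Psi_n(\beta)$ is an average of the i.i.d.\ terms $g_\beta(X_j,Y_j)$ whose common mean $\zeta(\beta)$ is finite by hypothesis, so $\Psi_n(\beta)\mathop{\to}\limits^{p}\zeta(\beta)$.

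The first bracket is where the sampling enters, and the key device is to condition on the data $\mathcal{D}:=(X_1^n,Y_1^n)$. Because the $r$ draws are taken with replacement, conditional on $\mathcal{D}$ the summands $\frac{g_\beta(X_i^*,Y_i^*)}{n\pi_i^*}$, $i=1,\dots,r$, are i.i.d., each selecting index $k$ with probability $\pi_k$. A one-line computation then gives the unbiasedness identity
$$E\Big[\tfrac{g_\beta(X_i^*,Y_i^*)}{n\pi_i^*}\,\Big|\,\mathcal{D}\Big]=\sum_{k=1}^n\pi_k\cdot\frac{g_\beta(X_k,Y_k)}{n\pi_k}=\Psi_n(\beta),$$
so $E[\Psi_n^*(\beta)\mid\mathcal{D}]=\Psi_n(\beta)$ and the first bracket is conditionally centered. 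For the $j$-th coordinate, conditional independence of the draws yields $V\big(\Psi_n^*(\beta)_j\mid\mathcal{D}\big)=\tfrac1r V\big(\tfrac{g_\beta(X_1^*,Y_1^*)_j}{n\pi_1^*}\mid\mathcal{D}\big)$, which I would bound by the conditional second moment
$$V\big(\Psi_n^*(\beta)_j\mid\mathcal{D}\big)\le\frac{1}{rn^2}\sum_{k=1}^n\frac{\{b'(X_k^T\beta)-Y_k\}^2\,x_{kj}^2}{\pi_k}.$$
Taking expectations and invoking the hypothesis \eqref{as1} gives $E\big[V(\Psi_n^*(\beta)_j\mid\mathcal{D})\big]\le\tfrac{1}{rn^2}\,o(n^2r)=o(1)$.

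To finish, I would apply the law of total variance to $\Psi_n^*(\beta)_j-\Psi_n(\beta)_j$. Since $\Psi_n(\beta)_j$ is $\mathcal{D}$-measurable and the conditional mean of the difference is $0$, the between-group term vanishes and $V\big(\Psi_n^*(\beta)_j-\Psi_n(\beta)_j\big)=E\big[V(\Psi_n^*(\beta)_j\mid\mathcal{D})\big]=o(1)$. As this difference also has mean zero, Chebyshev's inequality gives $\Psi_n^*(\beta)_j-\Psi_n(\beta)_j\mathop{\to}\limits^{p}0$. Combining the two brackets across all coordinates yields $\Psi_n^*(\beta)-\zeta(\beta)\mathop{\to}\limits^{p}0$.

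I expect the only genuine subtlety to be the bookkeeping between the two layers of randomness. Although the unconditional framework makes the sampled pairs marginally dependent (as emphasized elsewhere in the paper), the simplification here is that with-replacement draws are conditionally i.i.d.\ given $\mathcal{D}$; conditioning first, bounding the conditional variance, and only then taking expectations is precisely what lets condition \eqref{as1} act cleanly. Everything else is routine moment bookkeeping, and since only pointwise (not uniform) convergence is claimed, no empirical-process machinery is needed at this stage.
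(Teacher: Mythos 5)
Your proposal is correct, and it follows the same basic strategy as the paper: condition on the data, exploit the conditional i.i.d.\ structure of the with-replacement draws, and conclude by a second-moment/Chebyshev argument using condition \eqref{as1}. The one place you genuinely diverge is in the bookkeeping: the paper computes $E\bigl[\{(\Psi_n^*)_j-\zeta_j\}^2\bigr]$ in one shot and, in its first displayed equality, treats the conditional second moment of the sum as the sum of the terms' second moments about $\zeta_j$ --- which implicitly drops the cross terms $E\bigl[\{\Psi_n(\beta)_j-\zeta_j\}^2\bigr]$, since the conditional mean of each draw is the full-sample score $\Psi_n(\beta)_j$, not $\zeta_j$. (That omitted quantity equals $V\{g_\beta(X,Y)_j\}/n$, which is $o(1)$ under the standing assumptions, so the paper's conclusion is unaffected.) Your decomposition through $\Psi_n(\beta)$, with the law of total variance applied to the conditionally centered difference and the weak law handling $\Psi_n(\beta)-\zeta(\beta)$, accounts for exactly this term and is the cleaner way to organize the argument; the only cost is that you need $E[g_\beta(X,Y)_j^2]<\infty$ for the between-group term, which is already forced by \eqref{as1} (take $\pi_i\le 1$). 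One minor point worth a sentence in a polished write-up: since $\pi$ may depend on a random pilot estimate, the conditioning $\sigma$-field should include the pilot-sample randomness as well as $(X_1^n,Y_1^n)$ so that the $r$ draws are genuinely conditionally i.i.d.; this changes nothing in the computation.
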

\begin{proof}
Observe that 
\begin{align*}
E\Psi_n^*(\beta)&=E\left[\frac{1}{r} \sum_{i=1}^r\frac{b^{'}(X_i^{*T}\beta)-Y_i^*}{n\pi_i^*}\cdot X_i^*\right]\\
&=E\left[E\left\{\frac{1}{r} \sum_{i=1}^r\frac{b^{'}(X_i^{*T}\beta)-Y_i^*}{n\pi_i^*}\cdot X_i^*|(X_i,Y_i)_{i=1}^n\right\}\right]\\
&=\zeta.
\end{align*}
We will use Chebyshev's inequality to show convergence in probability. We denote the $j$-th element in the vector $(\Psi_n^*)$ by
\begin{align*}
(\Psi_n^*)_{j}&=\frac{1}{r} \sum_{i=1}^r\frac{b^{'}(X_i^{*T}\beta)-Y_i^*}{n\pi_i^*}\cdot x_{ij}^*,
\end{align*}
and the $j$-th coordinate of $\zeta$ as $\zeta_j$.

By using Chebyshev's inequality, it suffices to show $E[(\Psi_n^*)_{j}-\zeta_{j}]^2=o(1)$, where 
\begin{align*}
E[(\Psi_n^*)_{j}-\zeta_{j}]^2&=E\left[E\left\{\left[(\Psi_n^*)_{j}-\zeta_{j}\right]^2|(X_i,Y_i)_{i=1}^n\right\}\right].
\end{align*}
For this expectation
\begin{align*}
E\left[E\left\{\left[(\Psi_n^*)_{j}-\zeta_{j}\right]^2|(X_i,Y_i)_{i=1}^n\right\}\right]&=E\left[\frac{1}{r^2}\cdot \sum_{i=1}^rE_*\left\{\frac{b^{'}(X_i^{*T}\beta)-Y_i^*}{n\pi_i^*}\cdot x_{ij}^*-\zeta_{j}\right\}^2\right]\\
&=E\left[\frac{1}{r}\sum_{i=1}^n\pi_i\left\{\frac{b^{'}(X_i^{T}\beta)-Y_i}{n\pi_i}\cdot x_{ij}-\zeta_{j}\right\}^2\right]	\\
&=\frac{1}{rn^2}\sum_{i=1}^nE\left[\frac{\left\{b^{'}(X_i^{T}\beta)-Y_i\right\}^2}{\pi_i}\cdot x_{ij}^2\right]-\frac{1}{r}\zeta_{j}^2,
\end{align*}
where the first equality is based on the fact that after conditioning on the $n$ data points, the $r$ repeating sampling steps should be independent and distributionally identical in each step. And we use $E_*$ to denote expectation with respect to sampling randomness. Hence, we have$$E[(\Psi_n^*)_{j}-\zeta_{j}]^2=\frac{1}{rn^2}\sum_{i=1}^nE\left[\frac{(b^{'}\left( X_i^{T}\beta\right) -Y_i)^2}{\pi_i}\cdot x_{ij}^2\right]-\frac{1}{r}\zeta_{j}^2=o(1). $$
The second equality is due to the assumption (\ref{as1}).
\end{proof}
We now show the consistency of $\hat{\beta}_n$.

We will verify the conditions in Theorem 5.9 in \cite{Van2000} and apply the theorem to prove the consistency of $\hat{\beta}_n$.

First of all, for any $\beta_1$ and $\beta_2$ in $\mathcal{B}$,
\begin{align*}
&\left| [\Psi_n^*(\beta_1)-\zeta(\beta_1)]-[\Psi_n^*(\beta_2)-\zeta(\beta_2)]\right| \\
=&\left\| \left\{\frac{1}{r} \sum_{i=1}^r\frac{b^{''}(X_i^{*T}\tilde{\beta}_1)}{n\pi_i^*}\cdot X_i^*X_i^{*T}-E\left[b^{''}(X^T\tilde{\beta}_2)XX^T\right] \right\}\cdot(\beta_1-\beta_2) \right\| \\
\le& \left\|\frac{1}{r} \sum_{i=1}^r\frac{b^{''}(X_i^{*T}\tilde{\beta}_1)}{n\pi_i^*}\cdot X_i^*X_i^{*T}-E\left[b^{''}(X^T\tilde{\beta}_2)XX^T\right]\right\| \cdot \left\|\beta_1-\beta_2\right\|\\
:=&L_n\cdot  ||\beta_1-\beta_2||.
\end{align*}
The first step is due to mean value theorem with $\tilde{\beta}_1$ and $\tilde{\beta}_2$ lying on the segment between $\beta_1$ and $\beta_2$. 

We now show $L_n=O_p(1)$. By assumption (\ref{Con1}), it suffices to show $\frac{1}{r} \sum\limits_{i=1}^r\frac{X_i^*X_i^{*T}}{n\pi_i^*}= O_p(1)$. This is true because we have
\begin{align*}
&E\left[\frac{1}{r} \sum\limits_{i=1}^r\frac{X_i^*X_i^{*T}}{n\pi_i^*}\right]\\
=&E\left[E\left\{\frac{1}{r} \sum_{i=1}^r\frac{X_i^*X_i^{*T}}{n\pi_i^*}\bigg| (X_i,Y_i)_{i=1}^n\right\}\right]\\
=&EXX^T,
\end{align*} 
and it follows from Markov inequality.
Now we apply Lemma 2.9 in \cite{NEWEY19942111} to conclude $\Psi^*_n(\beta)-\zeta(\beta)$ is stochastic equicontinuous. Again, by Theorem 21.9 in \cite{Davidson1994}, Lemma \ref{Conlm1} and stochastic equicontinuity imply 
$$\sup\limits_{\beta\in \mathcal{B}} \left\|  \Psi^*_n(\beta)-\zeta(\beta)\right\| \mathop{\longrightarrow}\limits^p 0.$$
This uniform convergence condition together with condition (\ref{Con4}) in the theorem yield the desired conclusion by applying Theorem 5.9 in \cite{Van2000}. \qed
\section{Proof of Theorem \ref{AN}}
In this section, we will  establish the asymptotic normality of $\hat{\beta}_n$. Let us start will some auxiliary lemmas.

\subsection{Proof of Lemma \ref{AL}}	

We first prove following lemma.
\begin{lem}\label{ANlm1}
	Assume that $\Phi=E\left\{b^{''}(X^T\beta_0)XX^T\right\}$ is finite and non-singular. Further assume  for $1\le k,j \le p$, 
	\begin{equation}\label{as2}
		\sum_{i=1}^{n}E\left\{\frac{b^{''}(X_i^T\beta_0)^2}{\pi_i}(x_{ik}x_{ij})^2\right\}=o(n^2r).
	\end{equation}

	Then, we have $\dot{\Psi}_n^*(\beta_0)=\frac{1}{r} \sum_{i=1}^r\frac{b^{''}(X_i^{*T}\beta_0)}{n\pi_i^*}\cdot X_i^*X_i^{*T} \mathop{\longrightarrow}\limits^p \Phi$.
\end{lem}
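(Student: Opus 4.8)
The plan is to prove the matrix convergence entrywise, which suffices since $p$ is fixed and there are only $p^2$ entries to control. Fix indices $1\le k,j\le p$ and write the $(k,j)$ entry as $[\dot{\Psi}_n^*(\beta_0)]_{kj}=\frac{1}{r}\sum_{i=1}^r h_i^*$, where $h_i^*:=\frac{b^{''}(X_i^{*T}\beta_0)}{n\pi_i^*}\,x_{ik}^*x_{ij}^*$. The whole argument will mirror the proof of Lemma \ref{Conlm1}: first compute the mean, then bound a conditional second moment, and finally invoke Chebyshev's inequality. The only structural changes are that the summand is now a product $x_{ik}^*x_{ij}^*$ rather than $x_{ij}^*$, and that $b^{''}$ replaces $b'-Y$.

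First I would verify exact unbiasedness for $\Phi_{kj}$. Conditioning on the data $(X_i,Y_i)_{i=1}^n$ and using that the $r$ draws are, conditionally, i.i.d. with $P(\text{draw}=l)=\pi_l$, the conditional mean of a single term is $E_*[h_1^*]=\sum_{l=1}^n\pi_l\cdot\frac{b^{''}(X_l^T\beta_0)}{n\pi_l}x_{lk}x_{lj}=\frac{1}{n}\sum_{l=1}^n b^{''}(X_l^T\beta_0)x_{lk}x_{lj}=:\bar{g}$. Taking a further expectation over the i.i.d.\ data gives $E[\dot{\Psi}_n^*(\beta_0)]_{kj}=E[b^{''}(X^T\beta_0)x_kx_j]=\Phi_{kj}$, which is finite by the hypothesis that $\Phi$ is finite.

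Next I would control the fluctuations through the decomposition $[\dot{\Psi}_n^*(\beta_0)]_{kj}-\Phi_{kj}=\big([\dot{\Psi}_n^*(\beta_0)]_{kj}-\bar{g}\big)+\big(\bar{g}-\Phi_{kj}\big)$ and show each piece is $o_p(1)$. For the sampling-variation piece, conditional independence of the draws yields $\mathrm{Var}_*\big([\dot{\Psi}_n^*(\beta_0)]_{kj}\big)=\frac{1}{r}\mathrm{Var}_*(h_1^*)\le\frac{1}{r}E_*[(h_1^*)^2]=\frac{1}{rn^2}\sum_{l=1}^n\frac{b^{''}(X_l^T\beta_0)^2}{\pi_l}(x_{lk}x_{lj})^2$, so that $E\big[([\dot{\Psi}_n^*(\beta_0)]_{kj}-\bar{g})^2\big]=E\big[\mathrm{Var}_*(\cdots)\big]\le\frac{1}{rn^2}\sum_{l=1}^n E\big[\frac{b^{''}(X_l^T\beta_0)^2}{\pi_l}(x_{lk}x_{lj})^2\big]=o(1)$ by assumption (\ref{as2}); Chebyshev's inequality then sends this piece to $0$ in probability. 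For the data-fluctuation piece, $\bar{g}$ is an average of i.i.d.\ terms with mean $\Phi_{kj}$, so $\bar{g}\mathop{\longrightarrow}\limits^{p}\Phi_{kj}$ by the weak law of large numbers; in fact, since $\pi_l\le 1$ forces $1/\pi_l\ge 1$, assumption (\ref{as2}) already guarantees $E[b^{''}(X^T\beta_0)^2(x_kx_j)^2]<\infty$, so $\mathrm{Var}(\bar{g})=O(1/n)$. Combining the two pieces gives entrywise convergence, and letting $(k,j)$ range over all entries yields $\dot{\Psi}_n^*(\beta_0)\mathop{\longrightarrow}\limits^{p}\Phi$.

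I expect the only real care to be the correct bookkeeping of the two sources of randomness: isolating the sampling randomness, under which the draws are conditionally i.i.d.\ given the data so that cross-terms factor and the conditional variance splits cleanly, from the data randomness handled by the law of large numbers. There is no deep obstacle here; the computation is a routine conditional second-moment bound of exactly the type already carried out for Lemma \ref{Conlm1}, with the extra index simply turning scalars into the $(k,j)$ entries of a matrix.
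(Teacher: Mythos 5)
Your proof is correct and follows the same basic strategy as the paper's: fix an entry $(k,j)$, condition on the data, exploit the conditional i.i.d.\ structure of the $r$ draws to compute a second moment, and conclude by Chebyshev. The one genuine difference is the decomposition, and it is in your favor. The paper computes $E\big[\{(\dot{\Psi}_n^*)_{kj}-\Phi_{kj}\}^2\big]$ directly and, when expanding the conditional expectation, treats the summands $h_i^*-\Phi_{kj}$ as conditionally uncorrelated; strictly speaking their conditional mean is $\bar g=\frac1n\sum_l b''(X_l^T\beta_0)x_{lk}x_{lj}$ rather than $\Phi_{kj}$, so the cross terms contribute an extra $\frac{r-1}{r}E[(\bar g-\Phi_{kj})^2]=\frac{r-1}{r}\mathrm{Var}(\bar g)$ that the paper's display silently omits (it is $O(1/n)$, so the conclusion is unaffected, but the stated equality is not literally one). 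Your split into the sampling-variation piece $(\dot{\Psi}_n^*)_{kj}-\bar g$ and the data-fluctuation piece $\bar g-\Phi_{kj}$ handles exactly this term: the first piece is controlled by assumption (\ref{as2}) through the conditional variance bound, and the second by the weak law of large numbers; your observation that $\pi_l\le 1$ makes (\ref{as2}) imply $E\big[b''(X^T\beta_0)^2(x_kx_j)^2\big]<\infty$, hence $\mathrm{Var}(\bar g)=O(1/n)$, is also correct. In short, your version reaches the same conclusion from the same hypotheses with slightly more careful bookkeeping of the two sources of randomness.
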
 
\begin{proof}
First we note
\begin{align*}
E\dot{\Psi}_n^*(\beta_0)&=E\left[\frac{1}{r} \sum_{i=1}^r\frac{b^{''}(X_i^{*T}\beta_0)}{n\pi_i^*}\cdot X_i^*X_i^{*T}\right]\\
&=E\left[E\left\{\frac{1}{r} \sum_{i=1}^r\frac{b^{''}(X_i^{*T}\beta_0)}{n\pi_i^*}\cdot X_i^*X_i^{*T}\bigg|(X_i,Y_i)_{i=1}^n\right\}\right] =\Phi.
\end{align*}
To show the convergence in probability, we use Chebyshev's inequality. Consider each element in the matrix
\begin{align*}
(\dot{\Psi}_n^*)_{kj}&=\frac{1}{r} \sum_{i=1}^r\frac{b^{''}(X_i^{*T}\beta_0)}{n\pi_i^*}\cdot x_{ik}^*x_{ij}^{*},\\
\Phi_{kj}& = E\left[b^{''}(X^T\beta_0)x_kx_j\right]. 
\end{align*}
By using Chebyshev's inequality, it suffices to show $E\left[(\dot{\Psi}_n^*)_{kj}-\Phi_{kj}\right]^2=o(1)$.
\begin{align*}
E\left[(\dot{\Psi}_n^*)_{kj}-\Phi_{kj}\right]^2&=E\left[E\left\{\left[(\dot{\Psi}_n^*)_{kj}-\Phi_{kj}\right]^2\bigg|(X_i,Y_i)_{i=1}^n\right\}\right].
\end{align*}
For this expectation
\begin{align*}
E\left[E\left\{\left[(\dot{\Psi}_n^*)_{kj}-\Phi_{kj}\right]^2\bigg|(X_i,Y_i)_{i=1}^n\right\}\right]&=E\left[\frac{1}{r^2}\cdot \sum_{i=1}^rE_*\left[\frac{b^{''}(X_i^{*T}\beta_0)}{n\pi_i^*}\cdot x_{ik}^*x_{ij}^{*}-\Phi_{kj}\right]^2\right]\\
&=E\left[\frac{1}{r}\sum_{i=1}^n\pi_i\left[\frac{b^{''}(X_i^{T}\beta_0)}{n\pi_i}\cdot x_{ik}x_{ij}-\Phi_{kj}\right]^2\right]	\\
&=\frac{1}{rn^2}\sum_{i=1}^nE\left[\frac{b^{''}(X_i^{T}\beta_0)^2}{\pi_i}\cdot x_{ik}^2x_{ij}^2\right]-\frac{1}{r}\Phi_{kj}^2,
\end{align*}
where the first equality is based on the fact that after conditioning on the $n$ data points, the $r$ repeating sampling steps should be independent and distributionally identical in each step, and we use $E_*$ to denote expectation with respect to sampling randomness. Hence, we have$$E\left[(\dot{\Psi}_n^*)_{kj}-\Phi_{kj}\right]^2=\frac{1}{rn^2}\sum_{i=1}^nE\left[\frac{b^{''}(X_i^{T}\beta_0)^2}{\pi_i}\cdot x_{ik}^2x_{ij}^2\right]-\frac{1}{r}\Phi_{kj}^2=o(1). $$
The second equality is due to the assumption (\ref{as2}).	
\end{proof}
Now we prove Lemma \ref{AL}.

By Taylor's Theorem:
$$0=\Psi_n^*(\hat{\beta}_n)=\Psi_n^*(\beta_0)+\dot{\Psi}^*_n(\beta_0)(\hat{\beta}_n-\beta_0)+\frac{1}{2}(\hat{\beta}_n-\beta_0)^T\ddot{\Psi}_n^*(\tilde{\beta}_n)(\hat{\beta}_n-\beta_0),$$
where $\tilde{\beta}_n$ is on the line segment between $\beta_0$ and $\hat{\beta}_n$. $\ddot{\Psi}_n^*$ is a k-vector of $(k\times k)$ matrices.

We now show that $\left\|\ddot{\Psi}^*_n(\tilde{\beta}_n)\right\|=O_p(1)$.
By  assumption (\ref{AL3})
\begin{align*}
\left\|\ddot{\Psi}^*_n(\tilde{\beta}_n)\right\| &=\left\|\frac{1}{r}\sum_{i=1}^r\frac{1}{n\pi_i^*}\cdot \ddot{\psi}_{\tilde{\beta}_n}(X_i^*)\right\|\\
&\le \frac{1}{r}\sum_{i=1}^r\frac{1}{n\pi_i^*}\cdot \left\|\ddot{\psi}(X_i^*)\right\|=O_p(1).
\end{align*}
The last equality is because of the fact $$E\left[\frac{1}{r}\sum_{i=1}^r\frac{1}{n\pi_i^*}\cdot \left\|\ddot{\psi}(X_i^*)\right\|\right]=E\left\|\ddot{\psi}(X)\right\|={\rm const}$$
and application of Markov inequality.

Therefore,	
$$	0=\Psi_n^*(\beta_0)+(\Phi+o_p(1))\left(\hat{\beta}_n-\beta_0\right)+O_p\left(\left\|\hat{\beta}_n-\beta_0\right\|^2\right).$$
This implies the conclusion
$$\Psi_n^*(\beta_0)=-\Phi(\hat{\beta}_n-\beta_0)+o_p\left(\left\|\hat{\beta}_n-\beta_0\right\|\right).$$ \qed

\subsection{Multivariate martingale CLT}
Now, we prove a multivariate extension of the martingale central limit theorem stated in \cite{Ohlsson1989} Theorem A.1, which will be appropriate for our with replacement sampling setting. 
\begin{lem}[Multivariate version of martingale CLT] \label{MCLT}
	For $k=1,2,3,\dots $, let $\{\xi_{ki};i=1,2,\dots ,N_k\}$ be a martingale difference sequence in $\mathbb{R}^p$ relative to the filtration $\{\mathcal{F}_{ki};i=0,1,\dots ,N_k\}$ and let $Y_k\in \mathbb{R}^p$ be an $\mathcal{F}_{k0}$-measurable random vector. Set $S_k=\sum\limits_{i=1}^{N_k}  \xi_{ki}$. Assume the following conditions.
	\begin{enumerate}[(i)]
		\item $\lim\limits_{k\to \infty}\sum\limits_{i=1}^{N_K}E\left[ \left\|\xi_{ki}\right\|^4 \right]=0$
		\item $\lim\limits_{k\to \infty}E\left[\left\|\sum\limits_{i=1}^{N_k}E\left[\xi_{ki}\xi_{ki}^T|\mathcal{F}_{k,i-1}\right]-B_k\right\|^2\right]=0$ for some sequence of positive definite matrices $\{B_k\}_{k=1}^{\infty}$ with $\sup\limits_k \lambda_{max}(B_k)<\infty$ i.e. the largest eigenvalue is uniformly bounded.
		\item For some probability distribution $L_0$, $*$ denotes convolution and $L(\cdot)$ denotes the law of random variables:
		$$L(Y_k)*N(0,B_k)\mathop{\longrightarrow}\limits^d L_0.$$ 
	\end{enumerate}
	Then we have
	$$L(Y_k+S_k)\mathop{\longrightarrow}\limits^d L_0.$$  
\end{lem}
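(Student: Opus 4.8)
The plan is to reduce the multivariate statement to the univariate martingale central limit theorem of \cite{Ohlsson1989} (Theorem~A.1, which we take as given) via the Cram\'er--Wold device. Fix an arbitrary $\lambda\in\mathbb{R}^p$; it suffices to show that $\lambda^T(Y_k+S_k)$ converges in distribution to $\lambda^T W$, where $W\sim L_0$, since the family of all such one-dimensional limits determines $L_0$. For each $\lambda$, set $\eta_{ki}:=\lambda^T\xi_{ki}$. Because $E[\xi_{ki}\mid\mathcal{F}_{k,i-1}]=0$, linearity gives $E[\eta_{ki}\mid\mathcal{F}_{k,i-1}]=0$, so $\{\eta_{ki}\}$ is a scalar martingale difference sequence relative to the same filtration $\{\mathcal{F}_{ki}\}$, and $\lambda^T Y_k$ remains $\mathcal{F}_{k0}$-measurable.

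Next I would verify that the scalar array $\{\eta_{ki}\}$, together with $\lambda^T Y_k$ and the scalar centering sequence $b_k:=\lambda^T B_k\lambda$, satisfies the three hypotheses of the univariate theorem. For the Lyapunov condition, Cauchy--Schwarz gives $|\eta_{ki}|^4\le\|\lambda\|^4\,\|\xi_{ki}\|^4$, so $\sum_{i} E[\eta_{ki}^4]\le\|\lambda\|^4\sum_{i} E[\|\xi_{ki}\|^4]\to0$ by (i). For the conditional-variance condition, note $\eta_{ki}^2=\lambda^T\xi_{ki}\xi_{ki}^T\lambda$, whence $\sum_{i} E[\eta_{ki}^2\mid\mathcal{F}_{k,i-1}]=\lambda^T\bigl(\sum_{i} E[\xi_{ki}\xi_{ki}^T\mid\mathcal{F}_{k,i-1}]\bigr)\lambda$; subtracting $b_k$ and bounding the resulting quadratic form by $\|\lambda\|^2$ times the Frobenius norm of the matrix discrepancy yields
$$E\Bigl[\bigl|\textstyle\sum_{i=1}^{N_k} E[\eta_{ki}^2\mid\mathcal{F}_{k,i-1}]-b_k\bigr|^2\Bigr]\le\|\lambda\|^4\,E\Bigl[\bigl\|\textstyle\sum_{i=1}^{N_k} E[\xi_{ki}\xi_{ki}^T\mid\mathcal{F}_{k,i-1}]-B_k\bigr\|^2\Bigr]\to0$$
by (ii). Positive-definiteness of $B_k$ makes $b_k>0$, and $b_k\le\lambda_{\rm max}(B_k)\|\lambda\|^2\le\bigl(\sup_k\lambda_{\rm max}(B_k)\bigr)\|\lambda\|^2<\infty$, so the scalar variances are positive and uniformly bounded as required.

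For the convolution hypothesis, I would realize $L(Y_k)*N(0,B_k)$ as the law of $Y_k+G_k$ with $G_k\sim N(0,B_k)$ independent of $Y_k$; projecting, $\lambda^T(Y_k+G_k)=\lambda^T Y_k+\lambda^T G_k$ with $\lambda^T G_k\sim N(0,b_k)$ independent of $\lambda^T Y_k$, so $L(\lambda^T Y_k)*N(0,b_k)$ is precisely the law of $\lambda^T(Y_k+G_k)$. Applying the continuous-mapping theorem to (iii) then gives $L(\lambda^T Y_k)*N(0,b_k)\overset{d}{\to}\lambda^T W$, i.e. the scalar convolution hypothesis with limit $\lambda^T W$. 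Invoking the univariate theorem yields $\lambda^T(Y_k+S_k)\overset{d}{\to}\lambda^T W$, and since $\lambda$ was arbitrary, the Cram\'er--Wold device gives $L(Y_k+S_k)\overset{d}{\to}L_0$. The two moment-type transfers are routine consequences of Cauchy--Schwarz and submultiplicativity of the Frobenius norm; the step requiring most care is hypothesis (iii), where one must preserve the independent-Gaussian convolution structure under the linear projection and confirm that the projected limit is exactly $\lambda^T W$. Any degeneracy in which $b_k$ tends to zero along a subsequence is harmless, since the univariate theorem is used only as a black box whose hypotheses have been checked for every $\lambda$.
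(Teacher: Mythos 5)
Your proposal is correct and follows essentially the same route as the paper: both reduce to the univariate Theorem~A.1 of \cite{Ohlsson1989} via the Cram\'er--Wold device, verify the fourth-moment and conditional-variance conditions by the same Cauchy--Schwarz and Frobenius-norm bounds, and transfer the convolution hypothesis to the projection (the paper via characteristic functions, you via the equivalent independent-Gaussian representation plus continuous mapping).
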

\begin{proof}
We use Cramer-Wold device to deduce it from the univariate case. For any $a\in \mathbb{R}^p$, by Cramer-Wold device, it suffices to show 
$$L(a^TY_k+a^TS_k)\mathop{\longrightarrow}\limits^d a^TL_0.$$
We check the conditions of Theorem A.1 in \cite{Ohlsson1989}.
\begin{enumerate}
	\item $\sum\limits_{i=1}^{N_k} E\left[\left(a^T\xi_{ki}\right)^4\right]\le \sum\limits_{i=1}^{N_k} ||a||^4 \cdot E\left[||\xi_{ki}||^4\right]=||a||^4\sum\limits_{i=1}^{N_k}  E[||\xi_{ki}||^4] \longrightarrow 0 $ The inequality is due to Cauchy-Schwarz inequality.
	\item \begin{align*}
	&E\left[\sum\limits_{i=1}^{N_k}E\left[a^T\xi_{ki}\xi_{ki}^Ta|\mathcal{F}_{k,i-1}\right]-a^TB_ka\right]^2\\
	=&E\left[a^T\left\{\sum\limits_{i=1}^{N_k}E[\xi_{ki}\xi_{ki}|\mathcal{F}_{k,i-1}]-B_k\right\}a\right]^2\\
	\lesssim&E\left[\left\|\sum\limits_{i=1}^{N_k}E[\xi_{ki}\xi_{ki}|\mathcal{F}_{k,i-1}]-B_k\right\|\cdot ||a||^2\right]^2 \longrightarrow 0\   \text{as}\  k \to \infty.
	\end{align*}
	\item \begin{align*}
	\phi_{a^TY_k}\cdot\phi_{N(0,a^TB_ka)}=&E\left[e^{ita^TY_k}\right]\cdot e^{-\frac{1}{2}(a^TB_ka)t^2}\\
	=&E\left[e^{i\xi^TY_k}\right]\cdot e^{-\frac{1}{2}\xi^TB_k\xi}\qquad \text{where}\ \xi=at\\
	\longrightarrow &\phi_{L_0}(at)\equiv \phi_{a^TL_0}(t).
	\end{align*}
	Here we use $\phi_{*}(t)$to denote the characteristic function. Hence
	$$L(a^TY_k)*N(0,a^TB_ka)\mathop{\longrightarrow}\limits^d a^TL_0.$$ 
\end{enumerate}
From above verification, we use Theorem A.1 in \cite{Ohlsson1989} to obtain
$$L(a^TY_k+a^TS_k)\mathop{\longrightarrow}\limits^d a^TL_0.$$
And by Cramer-Wold device, this finishes the proof.	
\end{proof}

\subsection{More Auxiliary Results}
\begin{lem} \label{lem mg}
	$\{M_i\}_{i=1}^r$ is a martingale difference sequence relative to the filtration $\{\mathcal{F}_{n,i}\}_{i=1}^r$.	
\end{lem}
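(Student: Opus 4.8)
The plan is to verify directly the two defining properties of a martingale difference sequence: that each $M_i$ is integrable and $\mathcal{F}_{n,i}$-measurable, and that $E[M_i \mid \mathcal{F}_{n,i-1}] = 0$. Adaptedness is immediate from the construction. The first term of $M_i$ is a function of the $i$th sampled pair $(X_i^*, Y_i^*)$ together with its weight $\pi_i^*$, all of which are generated by $\sigma(X_1^n, Y_1^n) \lor \sigma(*_i) \subseteq \mathcal{F}_{n,i}$; the second (centering) term is a function of $(X_1^n, Y_1^n)$ alone, hence $\mathcal{F}_{n,0}$-measurable and a fortiori $\mathcal{F}_{n,i}$-measurable. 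Integrability of each $M_i$ follows from the finiteness conditions already imposed (e.g. the finiteness of $\zeta(\beta)$ in Condition (ii) of Theorem \ref{Con} and the moment conditions of Lemma \ref{AL}), so I will simply invoke these rather than recompute.

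The crux is evaluating $E[M_i \mid \mathcal{F}_{n,i-1}]$, and the key observation is that the sampling is done \emph{with replacement}, so that conditional on the full data $\mathcal{F}_{n,0}$ the $i$th draw is independent of the first $i-1$ draws. Because the summand in the first term depends only on the $i$th draw, this conditional independence lets me collapse the conditioning from $\mathcal{F}_{n,i-1}$ down to $\mathcal{F}_{n,0}$:
$$E\left[\frac{b'(X_i^{*T}\beta_0)-Y_i^*}{rn\pi_i^*}\,X_i^* \,\Big|\, \mathcal{F}_{n,i-1}\right] = E\left[\frac{b'(X_i^{*T}\beta_0)-Y_i^*}{rn\pi_i^*}\,X_i^* \,\Big|\, \mathcal{F}_{n,0}\right].$$

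With the conditioning reduced to the full data, the remaining expectation is purely over the sampling randomness and is computed explicitly: the $i$th draw selects $(X_j, Y_j)$ with probability $\pi_j$, in which event $\pi_i^* = \pi_j$, so the weights cancel and
$$E\left[\frac{b'(X_i^{*T}\beta_0)-Y_i^*}{rn\pi_i^*}\,X_i^* \,\Big|\, \mathcal{F}_{n,0}\right] = \sum_{j=1}^n \pi_j \cdot \frac{b'(X_j^T\beta_0)-Y_j}{rn\pi_j}\,X_j = \frac{1}{rn}\sum_{j=1}^n \{b'(X_j^T\beta_0)-Y_j\}\,X_j.$$
This is exactly the centering term subtracted in the definition of $M_i$, which is $\mathcal{F}_{n,0}$-measurable and therefore passes through the conditional expectation unchanged. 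Subtracting the two equal quantities yields $E[M_i \mid \mathcal{F}_{n,i-1}] = 0$, completing the verification.

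I expect the only genuinely delicate point to be the collapse from $\mathcal{F}_{n,i-1}$ to $\mathcal{F}_{n,0}$, i.e. justifying that the $i$th sampling step is conditionally independent of the previous $i-1$ steps given the data. This is precisely where the ``with replacement given the data'' structure is used, and it is the subtle feature that distinguishes the present (unconditional) setting from the false claim of unconditional independence of the subsamples emphasized in the Introduction; everything else is routine bookkeeping of measurability and a one-line expectation over the discrete sampling distribution.
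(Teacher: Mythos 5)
Your proof is correct and follows essentially the same route as the paper's: the paper's one-line computation writes $E[M_i\mid\mathcal{F}_{n,i-1}]$ as $E_{*_i}$ (expectation over the $i$th sampling step alone) minus the centering term, which is precisely the collapse from $\mathcal{F}_{n,i-1}$ to the full-data $\sigma$-algebra that you justify explicitly via the conditional independence of with-replacement draws. Your version simply makes explicit the measurability, integrability, and conditional-independence bookkeeping that the paper leaves implicit.
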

\begin{proof}
	The $\mathcal{F}_{n,i}$-measurability follows from the definition of $M_i$ and the definition of the filtration $\{\mathcal{F}_{n,i}\}_{i=1}^r$. And we also have
	\begin{align*}
	E[M_i|{\mathcal{F}_{n,i-1}}]&= E_{*_i}	\left[\frac{b'(X_i^{*T}\beta)-Y_i^*}{rn\pi_i^*}\cdot X_i^*\right]-\frac{1}{rn}\sum\limits_{j=1}^n\left(b'(X_j^T\beta_0)-Y_j\right)\cdot X_j =0.
	\end{align*}
	Combine these two results, we finish the proof.	
\end{proof}
With Lemma 4, we could easily get the following result.
\begin{cor}
	$V(T)=V(M)+V(Q).$
\end{cor}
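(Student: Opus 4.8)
The plan is to expand $V(T)$ using the decomposition $T = M + Q$ recorded just before the statement, so that
$$V(T) = V(M) + V(Q) + \mathrm{Cov}(M, Q) + \mathrm{Cov}(Q, M),$$
and then to show that the two cross-covariance terms vanish. Since $\mathrm{Cov}(Q,M) = \mathrm{Cov}(M,Q)^T$, it suffices to prove $\mathrm{Cov}(M, Q) = 0$.

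First I would record two preliminary facts. Because each $M_i$ is a martingale difference relative to $\{\mathcal{F}_{n,i}\}$ by Lemma \ref{lem mg}, we have $E[M_i] = E\{E[M_i \mid \mathcal{F}_{n,i-1}]\} = 0$, hence $E[M] = 0$ and consequently $\mathrm{Cov}(M, Q) = E[M Q^T]$. Second, $Q$ depends only on $(X_1^n, Y_1^n)$ and is therefore $\mathcal{F}_{n,0}$-measurable; since $\mathcal{F}_{n,0} \subseteq \mathcal{F}_{n,i-1}$ for every $i \ge 1$, the vector $Q$ is $\mathcal{F}_{n,i-1}$-measurable as well.

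The main step is a tower-property argument applied term by term. Writing $M = \sum_{i=1}^r M_i$ and conditioning on $\mathcal{F}_{n,i-1}$,
$$E[M Q^T] = \sum_{i=1}^r E[M_i Q^T] = \sum_{i=1}^r E\{E[M_i Q^T \mid \mathcal{F}_{n,i-1}]\} = \sum_{i=1}^r E\{E[M_i \mid \mathcal{F}_{n,i-1}] Q^T\} = 0,$$
where the third equality pulls the $\mathcal{F}_{n,i-1}$-measurable factor $Q^T$ outside the inner conditional expectation, and the final equality uses the martingale difference property $E[M_i \mid \mathcal{F}_{n,i-1}] = 0$. Combining this identity with its transpose gives $\mathrm{Cov}(M,Q) = \mathrm{Cov}(Q,M) = 0$, and therefore $V(T) = V(M) + V(Q)$.

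I expect the only subtlety, rather than a genuine obstacle, to be bookkeeping about measurability: one must confirm that $Q$ is indeed measurable with respect to the base $\sigma$-algebra $\mathcal{F}_{n,0} = \sigma(X_1^n, Y_1^n)$, so that it may be treated as a constant inside each conditional expectation, and that $E[M] = 0$, so that the cross term reduces to the raw second moment $E[M Q^T]$. Both facts follow immediately from the definition of $Q$, the definition of the filtration, and Lemma \ref{lem mg}, so no new estimates or moment conditions are required.
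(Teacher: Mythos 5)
Your proof is correct and is exactly the argument the paper intends: the corollary is stated as an immediate consequence of Lemma \ref{lem mg}, and the implicit reasoning is precisely your tower-property computation showing $\mathrm{Cov}(M,Q)=0$ because $Q$ is $\mathcal{F}_{n,0}$-measurable and $E[M_i\mid\mathcal{F}_{n,i-1}]=0$. The measurability bookkeeping you flag is handled correctly, so nothing is missing.
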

\begin{lem}
	$\sup\limits_n \lambda_{max}(B_n)\le 1.$
\end{lem}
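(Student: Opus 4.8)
The plan is to reduce the claim to the variance decomposition already recorded in the Corollary, namely $V(T) = V(M) + V(Q)$. Rewriting this as $V(M) = V(T) - V(Q)$ and inserting it into the definition $B_n = V(T)^{-\frac{1}{2}} V(M) V(T)^{-\frac{1}{2}}$, I would obtain
$$B_n = V(T)^{-\frac{1}{2}}\left(V(T) - V(Q)\right)V(T)^{-\frac{1}{2}} = I - V(T)^{-\frac{1}{2}} V(Q) V(T)^{-\frac{1}{2}}.$$
So the entire statement comes down to controlling the subtracted term on the right-hand side.

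Next I would argue that $V(Q)$, being the covariance matrix of the random vector $Q$, is positive semi-definite. Since $V(T)^{-\frac{1}{2}}$ is symmetric, the map $A \mapsto V(T)^{-\frac{1}{2}} A V(T)^{-\frac{1}{2}}$ is a congruence transformation and hence preserves positive semi-definiteness; therefore $V(T)^{-\frac{1}{2}} V(Q) V(T)^{-\frac{1}{2}}$ is positive semi-definite. Consequently $I - B_n$ is positive semi-definite, which means every eigenvalue of $B_n$ is at most $1$. In particular $\lambda_{\max}(B_n) \le 1$ for every $n$, and taking the supremum over $n$ yields $\sup_n \lambda_{\max}(B_n) \le 1$.

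The only point requiring care, rather than a genuine obstacle, is the well-definedness of $V(T)^{-\frac{1}{2}}$, i.e. that $V(T)$ is positive definite. This is implicit in the setup, since $B_n$ and the normalized increments $\xi_{ni}$ are only meaningful when $V(T)$ is invertible; moreover the decomposition $V(T) = V(M) + V(Q)$ with $V(Q)$ positive semi-definite shows that $V(T) - V(M) = V(Q)$ is positive semi-definite, so $V(T)$ dominates $V(M)$ and any nondegeneracy of $M$ transfers to $T$. Given this, no further integrability or martingale input is needed: the bound is a purely algebraic consequence of the additive variance decomposition and the positive semi-definiteness of a covariance matrix.
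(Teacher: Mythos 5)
Your proposal is correct and follows essentially the same route as the paper: both start from the decomposition $V(T)=V(M)+V(Q)$, write $I-B_n=V(T)^{-\frac{1}{2}}V(Q)V(T)^{-\frac{1}{2}}$, and conclude via congruence that $I-B_n$ is positive (semi-)definite. Your only refinement is observing that positive semi-definiteness of the covariance matrix $V(Q)$ already suffices, whereas the paper asserts positive definiteness; this is a minor sharpening, not a different argument.
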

\begin{proof}
	Since $B_n$ is symmetric, it suffices to show for any $n$, $I-B_n$ is positive definite.
	\begin{align*}
	I-B_n&=V(T)^{-\frac{1}{2}}\left(V(T)-V(M)\right)V(T)^{-\frac{1}{2}}\\
	&=V(T)^{-\frac{1}{2}}V(Q)V(T)^{-\frac{1}{2}}.
	\end{align*}
	Therefore, $I-B_n$ is congruent to matrix $V(Q)$ which is positive definite. Hence $I-B_n$ is also positive definite and this finishes the proof.
\end{proof}
\begin{lem}[Asymptotic normality of $\Psi_n^*(\beta_0)$]\label{ANPSI}
	Assume the following conditions
	\begin{enumerate}[(i)]
		\item $\lim\limits_{n\to \infty} \sum\limits_{i=1}^rE\left[||\xi_{ni}||^4\right]=0.$ \label{as3}
		\item $\lim\limits_{n\to \infty}E\left[\left\|\sum\limits_{i=1}^{r}E[\xi_{ni}\xi_{ni}^T|\mathcal{F}_{n,i-1}]-B_n\right\|^2\right]=0.$ \label{as4}
	\end{enumerate}
	Then we will have
	$$V(T)^{-\frac{1}{2}}\cdot T \mathop{\longrightarrow}\limits^d N(0,I).$$
\end{lem}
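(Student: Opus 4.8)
The plan is to derive Lemma \ref{ANPSI} as a direct application of the multivariate martingale central limit theorem, Lemma \ref{MCLT}. First I would recast the decomposition $T = M + Q$ into the form $Y_k + S_k$ appearing in Lemma \ref{MCLT}. Taking $N_n = r$, $\xi_{ni} = V(T)^{-1/2} M_i$, $S_n = \sum_{i=1}^r \xi_{ni} = V(T)^{-1/2} M$, and $Y_n = V(T)^{-1/2} Q$, one has $Y_n + S_n = V(T)^{-1/2} T$, so the target conclusion $V(T)^{-1/2} T \to N(0,I)$ is exactly $L(Y_n + S_n) \to L_0$ with $L_0 = N(0,I)$. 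Two structural facts make the identification legitimate: $\{\xi_{ni}\}$ is a martingale difference array relative to $\{\mathcal{F}_{n,i}\}$, which follows from Lemma \ref{lem mg} because $V(T)^{-1/2}$ is a deterministic (unconditional) matrix; and $Y_n$ is $\mathcal{F}_{n,0}$-measurable, since $Q$ is a function of the full data $(X_1^n, Y_1^n)$ and $\mathcal{F}_{n,0} = \sigma(X_1^n, Y_1^n)$.

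Next I would check the three hypotheses of Lemma \ref{MCLT}. Hypotheses (i) and (ii) are precisely the assumed conditions (\ref{as3}) and (\ref{as4}), with the matrices $B_n = V(T)^{-1/2} V(M) V(T)^{-1/2}$; the uniform spectral bound $\sup_n \lambda_{\max}(B_n) < \infty$ demanded by Lemma \ref{MCLT}(ii) is supplied by the preceding lemma, which gives $\sup_n \lambda_{\max}(B_n) \le 1$. The remaining hypothesis (iii), namely $L(Y_n) * N(0, B_n) \to N(0, I)$, is not among the stated assumptions and must be established from the structure of $Q$; this is where the real work lies.

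To verify (iii) I would use that $Q = \tfrac1n \sum_{j=1}^n W_j$, where $W_j := \{b'(X_j^T\beta_0) - Y_j\} X_j$ are i.i.d.\ with mean zero, the mean-zero property coming from the canonical-link identity $E[Y \mid X] = b'(X^T\beta_0)$. Writing $\Sigma_Q := V(W_1)$, the term $Y_n = \tfrac1n \sum_j V(T)^{-1/2} W_j$ is a sum of an independent, mean-zero triangular array whose total covariance is $\tfrac1n V(T)^{-1/2} \Sigma_Q V(T)^{-1/2} = V(T)^{-1/2} V(Q) V(T)^{-1/2}$, which, by the corollary $V(T) = V(M) + V(Q)$, equals exactly $I - B_n$. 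Working with characteristic functions, $L(Y_n) * N(0, B_n)$ has characteristic function $\phi_{Y_n}(t)\, e^{-\frac12 t^T B_n t}$, and a Lindeberg--Feller argument for the array $\{V(T)^{-1/2} W_j / n\}$ yields $\phi_{Y_n}(t) = e^{-\frac12 t^T (I - B_n) t}\,(1 + o(1))$; multiplying by $e^{-\frac12 t^T B_n t}$ collapses the exponent to $-\tfrac12 \|t\|^2$, the characteristic function of $N(0,I)$. This characteristic-function route also handles automatically the possibility that $Y_n$ degenerates (as when $r/n \to 0$ and $B_n \to I$).

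The main obstacle is the Lindeberg verification inside step (iii). Because $B_n$, and hence the covariance $I - B_n$ of $Y_n$, need not converge to a fixed limit, I cannot quote the textbook CLT with a fixed limiting variance and must instead use the moving-covariance version of Lindeberg--Feller that compares $\phi_{Y_n}(t)$ to $e^{-\frac12 t^T (I - B_n) t}$ under a uniform bound on $I - B_n$. Controlling the Lindeberg sum requires a bound on the spectral norm $\|V(T)^{-1/2}\|$, equivalently a lower bound on $\lambda_{\min}(V(T))$, which follows from $V(T) - V(M) = V(Q)$ being positive semidefinite together with the moment conditions already imposed in Lemma \ref{AL}. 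It is here that the relative scales of $V(M)$ (of order $1/r$) and $V(Q)$ (of order $1/n$) enter and the sampling structure is genuinely used.
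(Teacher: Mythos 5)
Your proposal is correct and takes essentially the same route as the paper: the same casting of $T=M+Q$ into Lemma \ref{MCLT} with $Y_n=V(T)^{-1/2}Q$ and $S_n=V(T)^{-1/2}M$, hypotheses (i)--(ii) supplied by the assumed conditions and the bound $\sup_n\lambda_{\max}(B_n)\le 1$, and hypothesis (iii) verified by a characteristic-function computation exploiting $V(T)^{-1/2}V(Q)V(T)^{-1/2}=I-B_n$. The only difference is in how $\phi_{Y_n}(t)\approx e^{-\frac12 t^T(I-B_n)t}$ is justified: the paper applies the ordinary i.i.d.\ CLT to $V(Q)^{-1/2}Q$ and transfers the conclusion via uniform convergence of characteristic functions on bounded sets together with the operator-monotonicity bound $\lambda_{\max}\bigl(V(Q)^{1/2}V(T)^{-1/2}\bigr)\le 1$ (its Lemma \ref{cd3}), whereas you run a triangular-array Lindeberg--Feller argument directly on the summands of $V(T)^{-1/2}Q$ controlled by a lower bound on $\lambda_{\min}(V(T))$; both devices rest on the same fact $V(T)-V(Q)=V(M)\ge 0$ and are equally valid.
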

\begin{proof}
	We verify the conditions in Lemma \ref{MCLT} with
	\begin{align*}
	&\xi_{ki}=\xi_{ni}, \qquad Y_k=V(T)^{-\frac{1}{2}}\cdot Q,\\
	&B_k=B_n,  \qquad  L_0 \sim N(0,I).
	\end{align*}
	By Lemma \ref{lem mg}, conditions (\ref{as3}) and (\ref{as4}), we can easily see the first two conditions of Lemma \ref{MCLT} are satisfied. It suffices to show the third condition in Lemma \ref{MCLT} holds. We first note the following conclusion
	$$V(Q)^{-\frac{1}{2}}Q \mathop{\longrightarrow}\limits^d N(0,I).$$
	This because $Q$ is a sum of i.i.d mean zero random variables, $(b'(X_j^T\beta_0)-Y_j)\cdot X_j$, which have finite variance and a simple application of central limit theorem will give the above conclusion.
	
	Now, we verify the third condition. For any $t\in \mathbb{R}^p$
	
	\begin{align*}
	&E[e^{it^TV(T)^{-\frac{1}{2}}Q}] \cdot e^{-\frac{1}{2}t^TV(T)^{-\frac{1}{2}}V(M)V(T)^{-\frac{1}{2}}t}\\
	=&\left(Ee^{it^TV(T)^{-\frac{1}{2}}V(Q)V(T)^{-\frac{1}{2}}t}+o(1)\right)\cdot e^{-\frac{1}{2}t^TV(T)^{-\frac{1}{2}}V(M)V(T)^{-\frac{1}{2}}t}\\
	=&Ee^{it^TV(T)^{-\frac{1}{2}}V(Q)V(T)^{-\frac{1}{2}}t} \cdot e^{-\frac{1}{2}t^TV(T)^{-\frac{1}{2}}V(M)V(T)^{-\frac{1}{2}}t}+o(1)\\
	=& e^{-\frac{1}{2}t^Tt}+o(1).
	\end{align*}	
	The first equality is due to Lemma \ref{cd3} in the following. 	
	Therefore, we have verified the third condition in Lemma \ref{MCLT}. And by that lemma we have
	$$V(T)^{-\frac{1}{2}}\cdot Q+V(T)^{-\frac{1}{2}}\cdot M=V(T)^{-\frac{1}{2}}T \mathop{\longrightarrow}\limits^d N(0,I). $$
\end{proof}
Now we state the following lemma that has been used in the proof of previous lemma.
\begin{lem}\label{cd3}
	Under conditions in Lemma \ref{ANPSI} For any $t\in \mathbb{R}^p$,
	$$\left|E\left[e^{it^TV(T)^{-\frac{1}{2}}Q}\right]-E\left[e^{it^TV(T)^{-\frac{1}{2}}V(Q)^{\frac{1}{2}}A_0}\right]\right|\longrightarrow 0$$ as $n \to \infty$ and $A_0\sim N(0,I)$.
\end{lem}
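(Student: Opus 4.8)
The plan is to reduce the claim to a statement about uniform convergence of characteristic functions on a compact set. First I would introduce the deterministic matrix $C_n := V(T)^{-\frac12}V(Q)^{\frac12}$ and the standardized vector $W_n := V(Q)^{-\frac12}Q$, so that $V(T)^{-\frac12}Q = C_n W_n$ and $V(T)^{-\frac12}V(Q)^{\frac12}A_0 = C_n A_0$. Setting $s_n := C_n^T t$, the quantity to be controlled becomes $|\phi_{W_n}(s_n)-\phi_{A_0}(s_n)|$, where $\phi_{W_n}$ and $\phi_{A_0}$ denote the characteristic functions of $W_n$ and of $A_0\sim N(0,I)$. Because $A_0$ is Gaussian, $\phi_{A_0}(s_n)=\exp(-\tfrac12\|s_n\|^2)$ has an explicit closed form, so the whole difference is governed by the behaviour of $\phi_{W_n}$ at the moving point $s_n$.

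Two ingredients feed into the argument. On one hand, $W_n = V(Q)^{-\frac12}Q$ is a standardized sum of the i.i.d.\ mean-zero vectors $\{(b'(X_j^T\beta_0)-Y_j)X_j\}_{j=1}^n$, which have finite variance, so the multivariate CLT gives $W_n \mathop{\longrightarrow}\limits^d A_0\sim N(0,I)$, exactly as already invoked in the proof of Lemma \ref{ANPSI}; in particular $\phi_{W_n}(s)\to\phi_{A_0}(s)$ pointwise in $s$. On the other hand, the test vectors $\{s_n\}$ stay in a fixed ball: using the Corollary $V(T)=V(M)+V(Q)$ together with the identity $C_nC_n^T = V(T)^{-\frac12}V(Q)V(T)^{-\frac12} = I-B_n$ from the preceding lemma, I get $\|s_n\|^2 = t^T C_nC_n^T t = t^T(I-B_n)t \le \|t\|^2$, since $I-B_n\preceq I$.

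The main obstacle is precisely that $s_n$ varies with $n$, so pointwise convergence of $\phi_{W_n}$ does not by itself control $\phi_{W_n}(s_n)-\phi_{A_0}(s_n)$. I would resolve this by upgrading pointwise convergence to uniform convergence on compacts. Weak convergence $W_n\mathop{\longrightarrow}\limits^d A_0$ makes the family $\{W_n\}$ tight, and tightness renders the characteristic functions $\{\phi_{W_n}\}$ equicontinuous via the elementary estimate $|\phi_{W_n}(s)-\phi_{W_n}(s')|\le E[\min(\|s-s'\|\,\|W_n\|,\,2)]$. Equicontinuity plus pointwise convergence yields, by an Arzel\`a--Ascoli argument, uniform convergence of $\phi_{W_n}$ to $\phi_{A_0}$ on every compact set. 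Since each $s_n$ lies in the ball of radius $\|t\|$, I conclude
\[
|\phi_{W_n}(s_n)-\phi_{A_0}(s_n)| \le \sup_{\|s\|\le\|t\|}|\phi_{W_n}(s)-\phi_{A_0}(s)| \longrightarrow 0,
\]
which is exactly the assertion of Lemma \ref{cd3}. Note that the argument never needs $C_n$ itself to converge, only that its operator norm remains bounded, which the Corollary and the bound $\lambda_{\max}(B_n)\le 1$ guarantee.
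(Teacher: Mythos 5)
Your argument is correct and its skeleton matches the paper's: both proofs rewrite the two expectations as characteristic functions evaluated at the moving point $s_n=V(Q)^{\frac12}V(T)^{-\frac12}t$, show that $\{s_n\}$ stays in a fixed ball, and then appeal to convergence of the characteristic functions of $V(Q)^{-\frac12}Q$ to those of $N(0,I)$ uniformly on bounded sets. You differ in the two supporting steps, and in both cases your route is arguably cleaner. For the norm bound, the paper estimates $\|s_n\|$ by the largest eigenvalue of the (non-symmetric) matrix $V(Q)^{\frac12}V(T)^{-\frac12}$, reduces to a similar symmetric matrix, and then invokes operator monotonicity of the matrix square root ($V(T)>V(Q)\Rightarrow V(T)^{\frac12}>V(Q)^{\frac12}$, citing Zhan) to conclude the eigenvalue is at most $1$; your one-line computation $\|s_n\|^2=t^TC_nC_n^Tt=t^T(I-B_n)t\le\|t\|^2$, using the corollary $V(T)=V(M)+V(Q)$ and $B_n\succeq 0$, reaches the same conclusion without any matrix-square-root monotonicity, and it also sidesteps the delicate point that for a non-symmetric matrix the spectral radius does not in general control the operator norm. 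For the uniformity of the characteristic-function convergence, the paper simply cites Chung (Chapter 6), whereas you reprove the fact from tightness, the equicontinuity estimate $|\phi_{W_n}(s)-\phi_{W_n}(s')|\le E[\min(\|s-s'\|\,\|W_n\|,2)]$, and an Arzel\`a--Ascoli argument; this is more self-contained at the cost of a few extra lines. No gaps.
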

\begin{proof}
	Since $V(Q)^{-\frac{1}{2}}Q \mathop{\longrightarrow}\limits^d N(0,I)$, for any $\xi\in \mathbb{R}^p$,
	$$\left|E\left[e^{i\xi ^TV(Q)^{-\frac{1}{2}}Q }\right]-E\left[e^{i\xi^TA_0}\right]\right| \longrightarrow 0$$
	as $n \to \infty$.	And the convergence is uniform in any finite set of $\xi$. (see Chapter 6 of \cite{Chung2001}). By setting $\xi=V(Q)^{\frac{1}{2}}V(T)^{-\frac{1}{2}}t^T$, to prove the lemma, it suffices to show 
	$$\sup\limits_n||\xi||< \infty.$$
	for any fixed $t$.
	Also we note that $$||\xi||\le \lambda_{max}\left(V(Q)^{\frac{1}{2}}V(T)^{-\frac{1}{2}}\right)\cdot ||t||.$$
	Hence, it is enough to show $\lambda_{max}\left(V(Q)^{\frac{1}{2}}V(T)^{-\frac{1}{2}}\right)\le 1$.
	For notation simplicity, we denote $A=V(Q)$ and $B=V(T)$ in the following proof of the lemma.Note the following equation holds
	$$A^{\frac{1}{2}}B^{-\frac{1}{2}}=B^{\frac{1}{4}}\left(B^{-\frac{1}{4}}A^{\frac{1}{2}}B^{-\frac{1}{4}}\right)B^{-\frac{1}{4}}$$
	That is $A^{\frac{1}{2}}B^{-\frac{1}{2}}$ is similar to $B^{-\frac{1}{4}}A^{\frac{1}{2}}B^{-\frac{1}{4}}$. Therefore, we only need to show $\lambda_{max}\left(B^{-\frac{1}{4}}A^{\frac{1}{2}}B^{-\frac{1}{4}}\right)\le 1$. This is implied by the fact $$I-B^{-\frac{1}{4}}A^{\frac{1}{2}}B^{-\frac{1}{4}}>0.$$
	In fact
	$$I-B^{-\frac{1}{4}}A^{\frac{1}{2}}B^{-\frac{1}{4}}=B^{-\frac{1}{4}}\left(B^{\frac{1}{2}}-A^{\frac{1}{2}}\right)B^{-\frac{1}{4}}.$$
	That is, $I-B^{-\frac{1}{4}}A^{\frac{1}{2}}B^{-\frac{1}{4}}$ is congruent to $B^{\frac{1}{2}}-A^{\frac{1}{2}}$. Therefore, it suffices to show $B^{\frac{1}{2}}-A^{\frac{1}{2}}$ is positive definite.
	
	Note $B>A$. This is because $B-A=V(M)>0$. Now we use Theorem 1.1 in \cite{Zhan2004}, then we will have $B^{\frac{1}{2}}-A^{\frac{1}{2}}>0$ which finishes proof.
\end{proof}
Now we are able to prove our theorem \ref{AN}  which shows the asymptotic normality of the sampling estimator $\hat{\beta}_n$.
\subsection{Proof of Theorem \ref{AN}}
\begin{proof}
	By Lemma \ref{AL}$$\Phi(\hat{\beta}_n-\beta_0)+o_p\left(\left\|\hat{\beta}_n-\beta_0\right\|\right)=-\Psi_n^*(\beta_0).$$
	Now we normalize both sides with $V(T)^{-\frac{1}{2}}$
	$$V(T)^{-\frac{1}{2}}\Phi(\hat{\beta}_n-\beta_0)+o_p\left(\left\|V(T)^{-\frac{1}{2}}\hat{\beta}_n-\beta_0\right\|\right)=-V(T)^{-\frac{1}{2}}\Psi_n^*(\beta_0).$$
	By Lemma \ref{ANPSI}
	$$V(T)^{-\frac{1}{2}}\Phi(\hat{\beta}_n-\beta_0)\mathop{\longrightarrow}\limits^d N(0,I).$$
\end{proof}
\section{Proof of Theorem \ref{thm2}}

First of all, we condition on $X_1^n$(or consider $X_1^n$ is fixed).
We now find out $V(T|X_1^n)=V(\Psi_n^*(\beta_0)|X_1^n)$. We have
$$V(T|X_1^n)=E_Y\left[V(T|X_1^n,Y_1^n)\right]+V_Y\left[E(T|X_1^n,Y_1^n)\right].$$
Here $E_Y$ means we take expectation w.r.t randomness of $Y$ after we conditioning on $X$. 
After some simple calculation, we could get
\begin{align*}
V_Y[E(T|X_1^n,Y_1^n)]&= \frac{1}{n^2}\sum\limits_{j=1}b^{''} (X_j^T\beta_0)\cdot X_jX_j^T,\\
E_Y[V(T|X_1^n,Y_1^n)]&=\frac{1}{n^2r}\sum\limits_{j=1}^nb^{''} (X_j^T\beta_0)\cdot X_jX_j^T\cdot \left(\frac{1}{\pi_j}-1\right).
\end{align*} 
Hence, we have
$$V(T|X_1^n)=\frac{1}{n^2}\sum\limits_{j=1}^nb^{''} (X_j^T\beta_0)\cdot X_jX_j^T\cdot \left(\frac{1}{r\pi_j}-\frac{1}{r}+1\right).$$
We now minimize $tr(\Phi^{-1}V(T|X_1^n)\Phi^{-1})$
\begin{align*}
tr(\Phi^{-1}V(T|X_1^n)\Phi^{-1})&=\frac{1}{n^2}\sum\limits_{j=1}^n tr\left(b^{''} (X_j^T\beta_0)\cdot \Phi^{-1}X_jX_j^T\Phi^{-1}\cdot (\frac{1}{r\pi_j}-\frac{1}{r}+1)\right)\\
&=\frac{1}{rn^2}\sum\limits_{j=1}^n tr\left(\frac{b^{''} (X_j^T\beta_0)}{\pi_j}\cdot \Phi^{-1}X_jX_j^T\Phi^{-1}\right)+ C\\
&=\frac{1}{rn^2}\sum\limits_{j=1}^n \frac{b^{''} (X_j^T\beta_0)}{\pi_j}\cdot \left\|\Phi^{-1}X_j\right\|^2+ C\\
&=\frac{1}{rn^2}\sum\limits_{j=1}^n\pi_j \sum\limits_{j=1}^n \frac{b^{''} (X_j^T\beta_0)}{\pi_j}\cdot \left\|\Phi^{-1}X_j\right\|^2+ C\\
&\ge\frac{1}{rn^2}\left(\sum\limits_{j=1}^n \sqrt{b^{''} (X_j^T\beta_0)}\cdot \left\|\Phi^{-1}X_j\right\|\right)^2+ C,
\end{align*}
where in the last step we use Cauchy-Schwarz inequality and the equality holds iff $\pi_j \propto \sqrt{b^{''} (X_j^T\beta_0)} \left\|\Phi^{-1}X_j\right\|$.

Now we consider $V(T)$ under random design.
$$V(T)=E\left[V(T|X_1^n)\right]+V\left[E(T|X_1^n)\right].$$
However, we could verify that in GLM
$$E(T|X_1^n)\equiv 0.$$
Therefore, we have $V(T)=E[V(T|X_1^n)]$. 
From this we have
\begin{align*}
\{\pi_j^{opt}\}_{j=1}^n&=\argmin\limits_{\pi}tr(\Phi^{-1} V(T)\Phi^{-1})\\
&=\argmin\limits_{\pi}tr\left(E\left[\Phi^{-1}V(T|X_1^n)\Phi^{-1}\right]\right)\\
&=\argmin\limits_{\pi}E\left[tr\left(\Phi^{-1}V(T|X_1^n)\Phi^{-1}\right)\right]\\
&=\argmin\limits_{\pi}tr\left(\Phi^{-1}V(T|X_1^n)\Phi^{-1}\right).
\end{align*}

\section{Additional Plots of Section \ref{sr}}
\subsection{Computational Time Plots for Logistic Regression}\label{logtm}
\begin{figure}[H]
	\centering
	\includegraphics[scale=0.52]{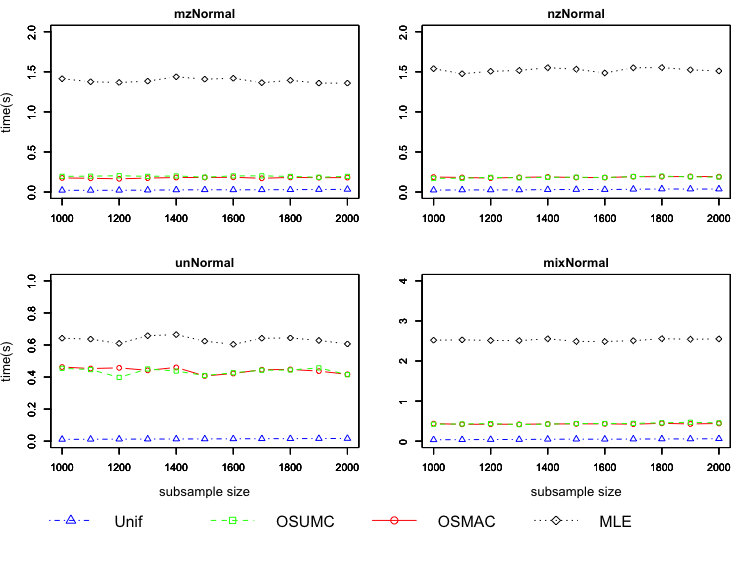}
	\caption{Computational time for different subsample size $r$ under different design generation settings for logistic regression with $r_0=500$}
	\label{logtime}
\end{figure}
 Figure \ref{logtime} reveals that the computation time is not very sensitive to the subsample size for all the four methods. All of the three sampling methods outperform the full sample \mbox{MLE}. It is not surprising to see the uniform sampling always takes the least computation time, since it does not involve the computation of the sampling probability to compensate for the loss of efficiency.  In most cases, OSUMC and OSMAC require significantly less computational time compared with full-sample \mbox{MLE}. 

\subsection{Q-Q Plots for Logistic Regression}\label{logqqs}
To see whether the asymptotic normality in our theory implies approximate finite-sample normality under the previous four different design generations in logistic regression model, we plot the chi-square Q-Q plot of the resultant estimator $\hat{\beta}_n$ for each considered setting.  Here, we replace the approximated optimal sampling weight in Algorithm \ref{ts} with the oracle optimal weights to calculate the estimator $\hat{\beta}_n$, i.e., the true $\beta_0$ is used in the calculation of optimal sampling weights. Experiments are repeated $1000$ times under each setting and corresponding Q-Q plots are presented in the Figure \ref{logqq}. 
Nearly all the points lie on the straight line in each plot, which is consistent with $\hat{\beta}_n$ being approximately normally distributed in the four considered design generation settings.
\begin{figure}[H]
	\begin{subfigure}{0.55\textwidth}
		\centering
		\includegraphics[scale=0.5]{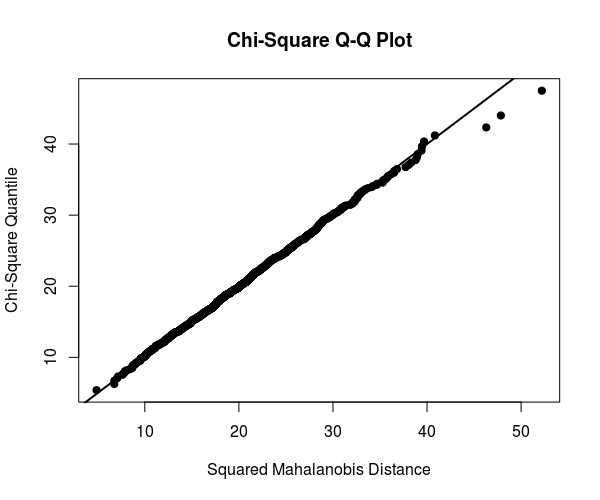}
		\caption{mzNormal}
		
	\end{subfigure}%
	\begin{subfigure}{0.55\textwidth}
		\centering
		\includegraphics[scale=0.5]{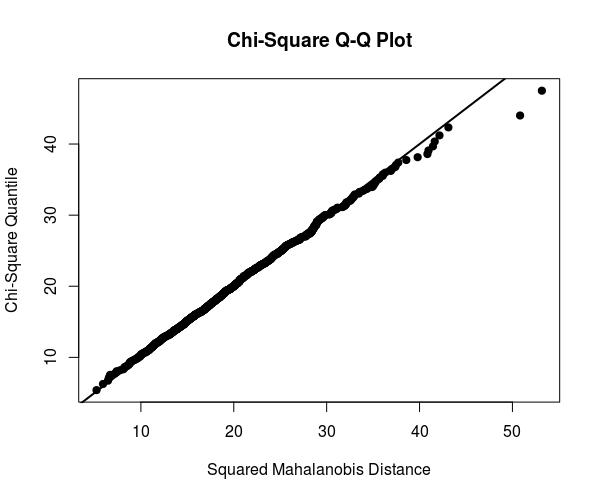}
		\caption{nzNormal}
		
	\end{subfigure}
	\begin{subfigure}{0.55\textwidth}
		\centering
		\includegraphics[scale=0.5]{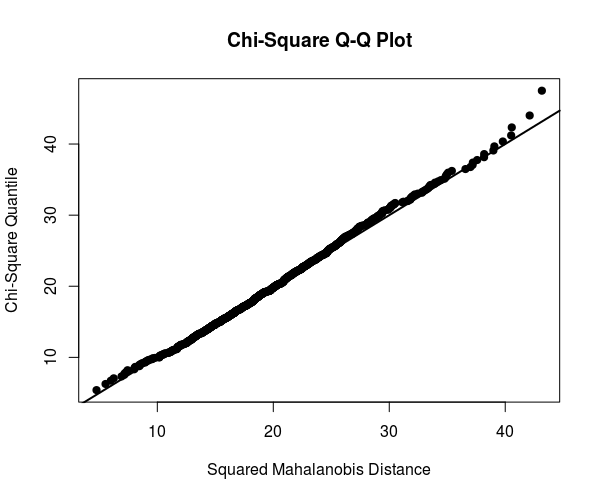}
		\caption{unNormal}
		
	\end{subfigure}%
	\begin{subfigure}{0.55\textwidth}
		\centering
		\includegraphics[scale=0.5]{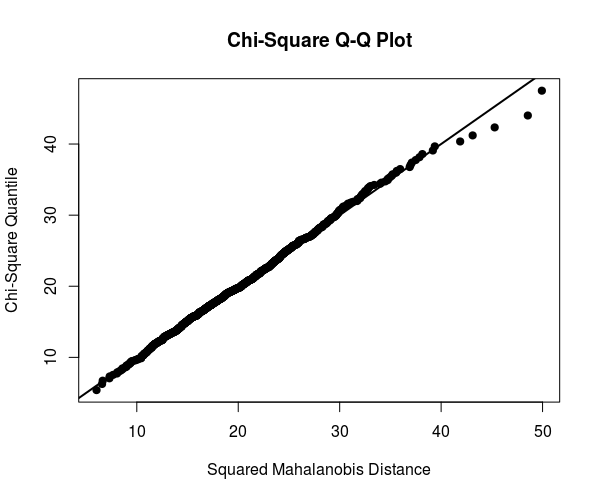}
		\caption{mixNormal}
		
	\end{subfigure}
	\caption{Chi-square Q-Q plots of $\hat{\beta}_n$ under different design generation settings for logistic regression with $r=5000$.}
	\label{logqq}
\end{figure}
\subsection{Computational Time Plots for Linear Regression}\label{lrct}

\begin{figure}[H]
	\centering
	\includegraphics[scale=0.42]{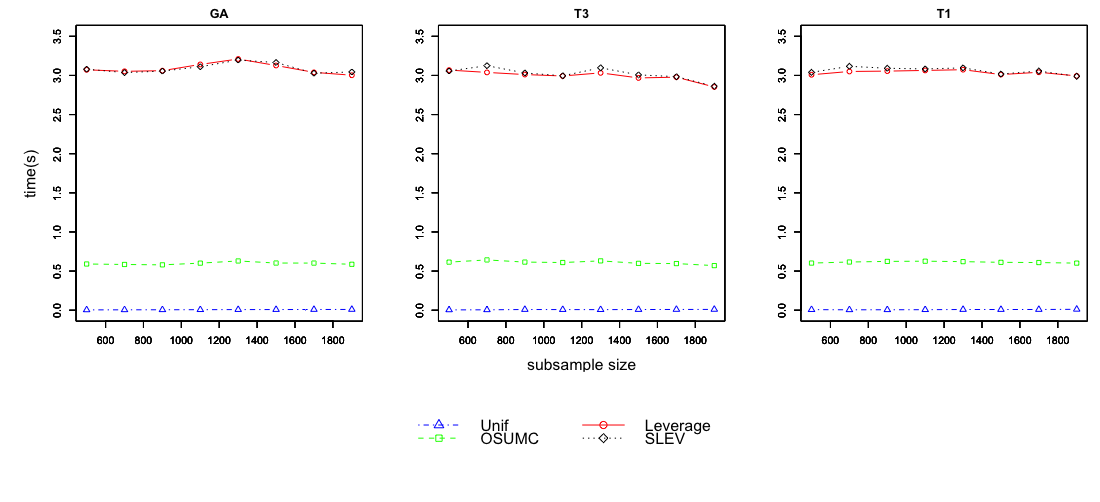}
	\caption{Computational time plots for different subsample size $r$ under different design generation settings for linear regression}
	\label{lrtime}
\end{figure}
From Figure \ref{lrtime}, Again, the results show the insensitivity of the computational time to increasing subsample sizes. Our method requires the second smallest computing time, being inferior only to the uniform sampling. Both leverage related methods take more than double the computational time of our method due to the intensive computation of leveraging score of each data point.

\subsection{Q-Q Plots for Linear Regression}\label{lrqqs}	
To explore further how sensitive the approximate finite-sample normality is to the moment condition of the design distribution in linear regression setting, we show chi-square Q-Q plots  for several design generation distributions with different orders of moment. To be more specific, GA, $T_9$, $T_3$ and $T_1$ distributions are considered. Experiments are repeated $1000$ times under each setting and results are presented in the following.

\begin{figure}[H]
	\begin{subfigure}{0.55\textwidth}
		\centering
		\includegraphics[scale=0.5]{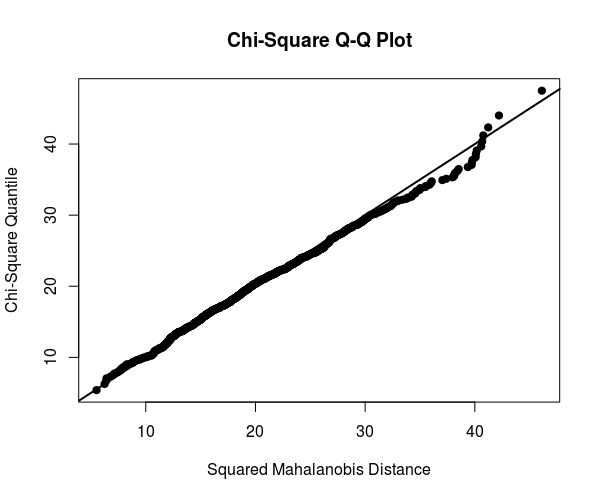}
		\caption{GA}
		
	\end{subfigure}%
	\begin{subfigure}{0.55\textwidth}
		\centering
		\includegraphics[scale=0.5]{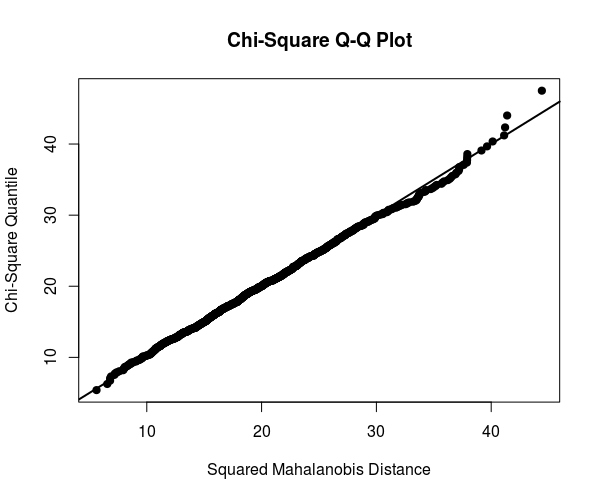}
		\caption{$T_9$}
		
	\end{subfigure}
	\begin{subfigure}{0.55\textwidth}
		\centering
		\includegraphics[scale=0.5]{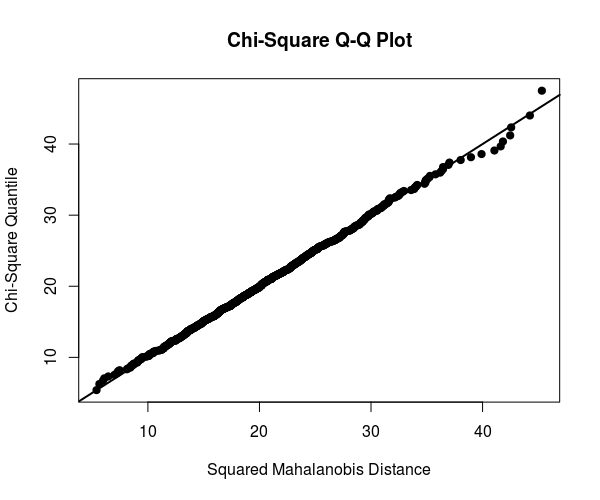}
		\caption{$T_3$}
		
	\end{subfigure}%
	\begin{subfigure}{0.55\textwidth}
		\centering
		\includegraphics[scale=0.5]{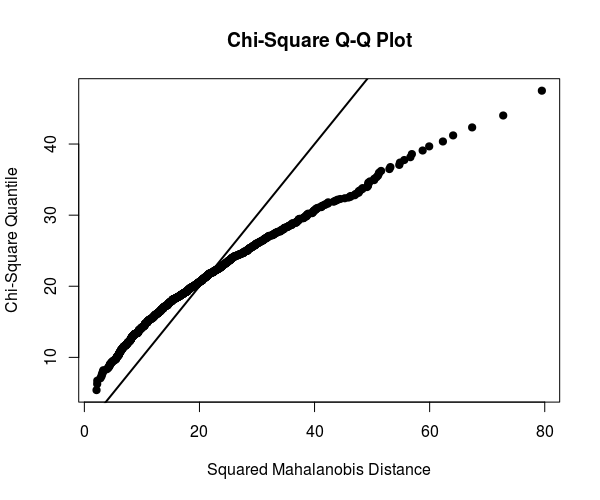}
		\caption{$T_1$}
		
	\end{subfigure}%
	\caption{Chi-square Q-Q plots of $\hat{\beta}_n$ under different design generation settings for linear regression with $r=5000$.}\label{qqlm}
\end{figure}
As shown in Figure \ref{qqlm}, the resultant sampling estimator $\hat{\beta}_n$ is approximately normal in GA, $T_9$ and $T_3$ settings where we should note that the multivariate t-distribution with 3 degrees of freedom doesn't even have a third moment. This indicates that the normality of $\hat{\beta}_n$  in linear models holds under very weak moment conditions for the design-generation distribution. One surprising fact is that OSUMC outperforms the other sampling methods in the $T_1$ setting, even though normality fails to hold.  

\subsection{Computational Time Plot for Superconductivity Data Set}\label{rdct}		
	\begin{figure}[H]
		\centering
		\includegraphics[scale=0.6]{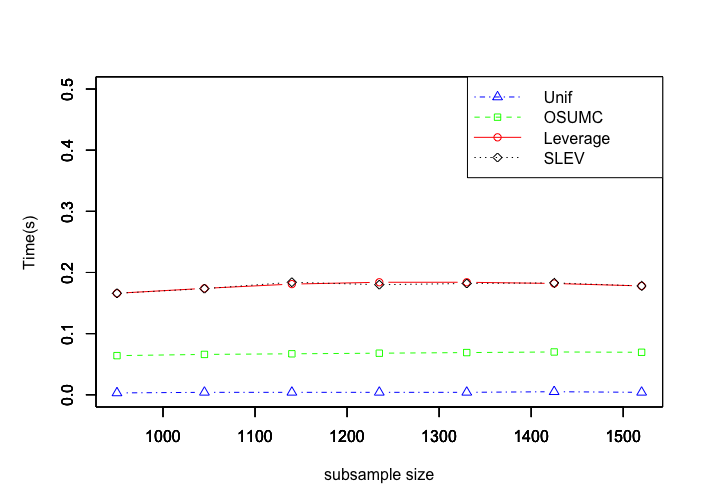}
		\caption{Computational time plots for different subsample size}\label{rl_t}
	\end{figure}

\section{Simulation for Poisson Regression}\label{Poisson}
We generate datasets of size $n=100,000$ from the following Poisson regression model,
$$Y\sim \rm{Poisson}\left(\exp\{X^T\beta_0\}\right),$$
where $\beta_0$ is a $100$ dimensional vector with all entries $0.5$. We consider two different scenarios to generate $X$. 
\begin{itemize}
	\item \textbf{Case 1.} each covariate of $X$ follows independent uniform distribution over $[-0.5,0.5]$.
	\item  \textbf{Case 2.} First half of covariates of $X$ follow independent uniform distribution over $[-0.5,0.5]$ while the other half follow independent uniform distribution over $[-1,1]$.
\end{itemize}
In each case, we compare our optimal sampling procedure (OSUMC) with uniform sampling (Unif), and the benchmark full data MLE under different subsample sizes. In our procedure, $r_0$, the subsample size in the first step uniform sampling, equals $500$. For uniform sampling, we directly subsample $r$ points and calculate the subsample MLE. Again, we repeat the simulation $500$ times and report the empirical MSE and computational time in Figures \ref{poimse} and \ref{poitime}, respectively.  
\begin{figure}[h]
	
	\centering
	\includegraphics[scale=0.7]{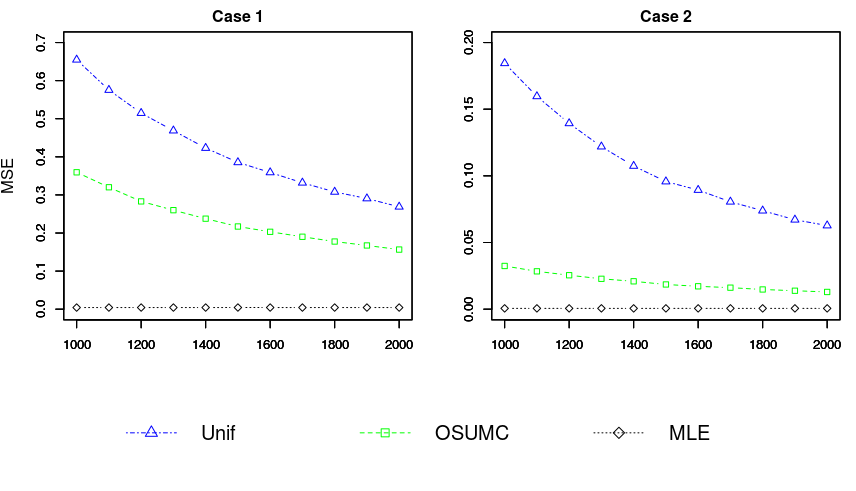}
	\caption{MSE of the proposed optimal sampling procedure (OSUMC), the uniform sampling (Unif), and the full sample MLE (MLE) for different subsample size $r$ under two scenarios in Poisson regression.}
	\label{poimse}
\end{figure}
\begin{figure}[h]
	
	\centering
	\includegraphics[scale=0.7]{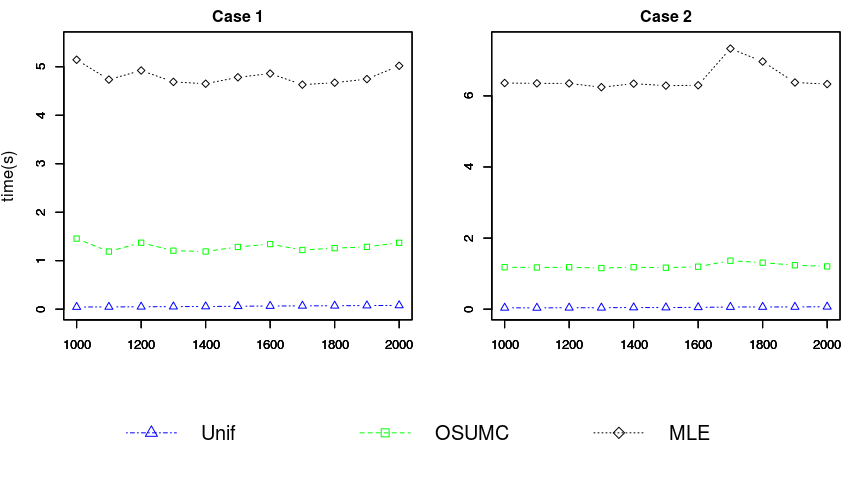}
	\caption{Computational time plot for different subsample size under different design generation settings for Poisson regression.}
	\label{poitime}
\end{figure}

From Figure \ref{poimse}, OSUMC method uniformly dominates the uniform sampling method in both scenarios in terms of mean squared errors, which supports the A-optimality of OSUMC. For average computational time, our simulation reveals that the computation time is not very sensitive to the subsample size. In both cases, OSUMC requires significantly less computational time compared with full-sample MLE. 

To see whether the asymptotic normality in our theory holds under the both design generation settings, we plot the chi-square Q-Q plot of the resultant estimator $\hat{\beta}_n$ for each considered setting and the results are presented in Figure \ref{poiqq}. Q-Q plots reveal that $\hat{\beta}_n$ is approximately normal with sufficiently large sample size $n$ and subsample size $r$ in the both considered design generation settings, which again provides empirical support for our theoretical results, especially the discussion of Poisson regression after Theorem \ref{Con}.  
\begin{figure}[H]
	\begin{subfigure}{0.55\textwidth}
		\centering
		\includegraphics[scale=0.55]{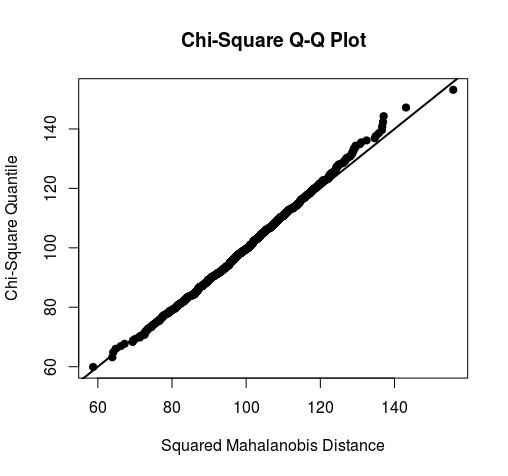}
		\caption{Case 1}
		
	\end{subfigure}%
	\begin{subfigure}{0.55\textwidth}
		\centering
		\includegraphics[scale=0.55]{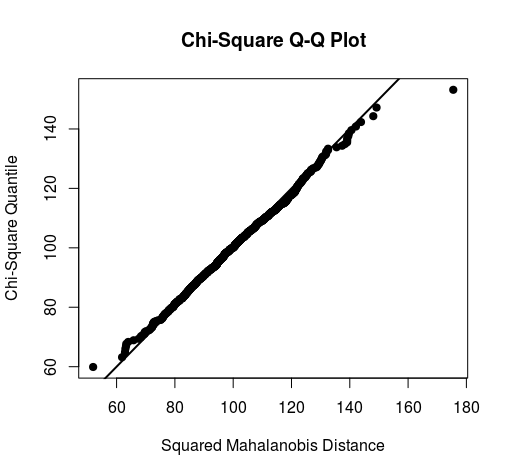}
		\caption{Case 2}
		
	\end{subfigure}
	\caption{Chi-square Q-Q plots of $\hat{\beta}_n$ under different design generation settings for Poisson regression with $r=5000$.}
	\label{poiqq}
\end{figure}

\bibliographystyle{apa}
\bibliography{Bibs2} 

\begin{thebibliography}{}

\bibitem[\protect\astroncite{Ai et~al.}{2018}]{ai2018optimal}
Ai, M., Yu, J., Zhang, H., and Wang, H. (2018).
\newblock Optimal subsampling algorithms for big data regressions.
\newblock {\em arXiv preprint arXiv:1806.06761}.

\bibitem[\protect\astroncite{Aitkin et~al.}{2005}]{aitkin2005statistical}
Aitkin, M.~A., Aitkin, M., Francis, B., and Hinde, J. (2005).
\newblock {\em Statistical modelling in GLIM 4}, volume~32.
\newblock OUP Oxford.

\bibitem[\protect\astroncite{Banerji et~al.}{2010}]{Banerji2010}
Banerji, M., Lahav, O., Lintott, C.~J., Abdalla, F.~B., Schawinski, K.,
  Bamford, S.~P., Andreescu, D., Murray, P., Raddick, M.~J., Slosar, A., et~al.
  (2010).
\newblock Galaxy zoo: reproducing galaxy morphologies via machine learning.
\newblock {\em Monthly Notices of the Royal Astronomical Society},
  406(1):342--353.

\bibitem[\protect\astroncite{Cai and Guo}{2018}]{cai2018semi}
Cai, T.~T. and Guo, Z. (2018).
\newblock Semi-supervised inference for explained variance in high-dimensional
  linear regression and its applications.
\newblock {\em arXiv preprint arXiv:1806.06179}.

\bibitem[\protect\astroncite{Chakrabortty et~al.}{2018}]{Chakrabortty2018}
Chakrabortty, A., Cai, T., et~al. (2018).
\newblock Efficient and adaptive linear regression in semi-supervised settings.
\newblock {\em The Annals of Statistics}, 46(4):1541--1572.

\bibitem[\protect\astroncite{Chapelle et~al.}{2010}]{Chapelle2006}
Chapelle, O., Schlkopf, B., and Zien, A. (2010).
\newblock {\em Semi-Supervised Learning}.
\newblock The MIT Press, 1st edition.

\bibitem[\protect\astroncite{Chung}{2001}]{Chung2001}
Chung, K. (2001).
\newblock {\em A Course in Probability Theory}.
\newblock Elsevier Science.

\bibitem[\protect\astroncite{Davidson}{1994}]{Davidson1994}
Davidson, J. (1994).
\newblock {\em Stochastic Limit Theory: An Introduction for Econometricians}.
\newblock OUP Oxford.

\bibitem[\protect\astroncite{Drineas et~al.}{2012}]{Drineas2012}
Drineas, P., Magdon-Ismail, M., Mahoney, M.~W., and Woodruff, D.~P. (2012).
\newblock Fast approximation of matrix coherence and statistical leverage.
\newblock {\em Journal of Machine Learning Research}, 13(1):3475--3506.

\bibitem[\protect\astroncite{Drineas and Mahoney}{2016}]{drineas2016randnla}
Drineas, P. and Mahoney, M.~W. (2016).
\newblock Randnla: randomized numerical linear algebra.
\newblock {\em Communications of the ACM}, 59(6):80--90.

\bibitem[\protect\astroncite{Drineas et~al.}{2006}]{Drineas2006}
Drineas, P., Mahoney, M.~W., and Muthukrishnan, S. (2006).
\newblock Sampling algorithms for l2 regression and applications.
\newblock In {\em Proceedings of the Seventeenth Annual ACM-SIAM Symposium on
  Discrete Algorithm}, SODA '06, pages 1127--1136, Philadelphia, PA, USA.
  Society for Industrial and Applied Mathematics.

\bibitem[\protect\astroncite{Drineas et~al.}{2011}]{Drineas2011}
Drineas, P., Mahoney, M.~W., Muthukrishnan, S., and Sarl{\'o}s, T. (2011).
\newblock Faster least squares approximation.
\newblock {\em Numerische Mathematik}, 117(2):219--249.

\bibitem[\protect\astroncite{Hall and Heyde}{2014}]{hall2014martingale}
Hall, P. and Heyde, C.~C. (2014).
\newblock {\em Martingale limit theory and its application}.
\newblock Academic press.

\bibitem[\protect\astroncite{Hamidieh}{2018}]{Hamidieh2018}
Hamidieh, K. (2018).
\newblock A data-driven statistical model for predicting the critical
  temperature of a superconductor.
\newblock {\em Computational Materials Science}, 154:346 -- 354.

\bibitem[\protect\astroncite{Huber}{2004}]{Huber2011}
Huber, P. (2004).
\newblock {\em Robust Statistics}.
\newblock Wiley Series in Probability and Statistics - Applied Probability and
  Statistics Section Series. Wiley.

\bibitem[\protect\astroncite{Khuri et~al.}{2006}]{Khuri2006}
Khuri, A.~I., Mukherjee, B., Sinha, B.~K., and Ghosh, M. (2006).
\newblock Design issues for generalized linear models: A review.
\newblock {\em Statistical Science}, 21(3):376--399.

\bibitem[\protect\astroncite{Kiefer}{1959}]{Kiefer1959}
Kiefer, J. (1959).
\newblock Optimum experimental designs.
\newblock {\em Journal of the Royal Statistical Society: Series B
  (Methodological)}, 21(2):272--304.

\bibitem[\protect\astroncite{Ma et~al.}{2015}]{Ma2015}
Ma, P., Mahoney, M.~W., and Yu, B. (2015).
\newblock A statistical perspective on algorithmic leveraging.
\newblock {\em The Journal of Machine Learning Research}, 16(1):861--911.

\bibitem[\protect\astroncite{McCullagh and
  Nelder}{1989}]{mccullagh1989generalized}
McCullagh, P. and Nelder, J. (1989).
\newblock {\em Generalized Linear Models, Second Edition}.
\newblock Chapman and Hall/CRC Monographs on Statistics and Applied Probability
  Series. Chapman \& Hall.

\bibitem[\protect\astroncite{Newey and McFadden}{1986}]{NEWEY19942111}
Newey, W. and McFadden, D. (1986).
\newblock Large sample estimation and hypothesis testing.
\newblock In Engle, R.~F. and McFadden, D., editors, {\em Handbook of
  Econometrics}, volume~4, chapter~36, pages 2111--2245. Elsevier, 1 edition.

\bibitem[\protect\astroncite{Ohlsson}{1989}]{Ohlsson1989}
Ohlsson, E. (1989).
\newblock Asymptotic normality for two-stage sampling from a finite population.
\newblock {\em Probability Theory and Related Fields}, 81(3):341--352.

\bibitem[\protect\astroncite{Pukelsheim}{2006}]{Pukelsheim2006}
Pukelsheim, F. (2006).
\newblock {\em Optimal design of experiments}.
\newblock SIAM.

\bibitem[\protect\astroncite{Raskutti and
  Mahoney}{2016}]{raskutti2016statistical}
Raskutti, G. and Mahoney, M.~W. (2016).
\newblock A statistical perspective on randomized sketching for ordinary
  least-squares.
\newblock {\em The Journal of Machine Learning Research}, 17(1):7508--7538.

\bibitem[\protect\astroncite{Reiman and G{\"o}hre}{2019}]{Reiman2019}
Reiman, D.~M. and G{\"o}hre, B.~E. (2019).
\newblock Deblending galaxy superpositions with branched generative adversarial
  networks.
\newblock {\em Monthly Notices of the Royal Astronomical Society},
  485(2):2617--2627.

\bibitem[\protect\astroncite{Rousseeuw and Hubert}{2011}]{Rousseeuw2011}
Rousseeuw, P.~J. and Hubert, M. (2011).
\newblock Robust statistics for outlier detection.
\newblock {\em Wiley Interdisciplinary Reviews: Data Mining and Knowledge
  Discovery}, 1(1):73--79.

\bibitem[\protect\astroncite{Ting and Brochu}{2018}]{Ting2018}
Ting, D. and Brochu, E. (2018).
\newblock Optimal subsampling with influence functions.
\newblock In {\em Advances in Neural Information Processing Systems}, pages
  3650--3659.

\bibitem[\protect\astroncite{van~der Vaart}{2000}]{Van2000}
van~der Vaart, A.~W. (2000).
\newblock {\em Asymptotic Statistics}, volume~3.
\newblock Cambridge University Press.

\bibitem[\protect\astroncite{Wang et~al.}{2019}]{wang2019information}
Wang, H., Yang, M., and Stufken, J. (2019).
\newblock Information-based optimal subdata selection for big data linear
  regression.
\newblock {\em Journal of the American Statistical Association},
  114(525):393--405.

\bibitem[\protect\astroncite{Wang et~al.}{2018}]{Wang2017}
Wang, H., Zhu, R., and Ma, P. (2018).
\newblock Optimal subsampling for large sample logistic regression.
\newblock {\em Journal of the American Statistical Association},
  113(522):829--844.

\bibitem[\protect\astroncite{Wang et~al.}{2017}]{WangY2017}
Wang, Y., Yu, A.~W., and Singh, A. (2017).
\newblock On computationally tractable selection of experiments in
  measurement-constrained regression models.
\newblock {\em The Journal of Machine Learning Research}, 18(1):5238--5278.

\bibitem[\protect\astroncite{Xu et~al.}{2016}]{xu2016sub}
Xu, P., Yang, J., Roosta, F., R{\'e}, C., and Mahoney, M.~W. (2016).
\newblock Sub-sampled newton methods with non-uniform sampling.
\newblock In {\em Advances in Neural Information Processing Systems}, pages
  3000--3008.

\bibitem[\protect\astroncite{Zhan}{2004}]{Zhan2004}
Zhan, X. (2004).
\newblock {\em Matrix Inequalities}.
\newblock Springer.

\bibitem[\protect\astroncite{Zhang et~al.}{2016}]{zhang2016semi}
Zhang, A., Brown, L.~D., and Cai, T.~T. (2016).
\newblock Semi-supervised inference: General theory and estimation of means.
\newblock {\em arXiv preprint arXiv:1606.07268}.

\bibitem[\protect\astroncite{Zhu}{2005}]{Zhu2005}
Zhu, X.~J. (2005).
\newblock Semi-supervised learning literature survey.
\newblock Technical report, University of Wisconsin-Madison Department of
  Computer Sciences.

\end{thebibliography}


\begin{thebibliography}{}


\bibitem[\protect\astroncite{Chung}{2001}]{Chung2001}
Chung, K. (2001).
\newblock {\em A Course in Probability Theory}.
\newblock Elsevier Science.

\bibitem[\protect\astroncite{Davidson}{1994}]{Davidson1994}
Davidson, J. (1994).
\newblock {\em Stochastic Limit Theory: An Introduction for Econometricians}.
\newblock OUP Oxford.

\bibitem[\protect\astroncite{Newey and McFadden}{1986}]{NEWEY19942111}
Newey, W. and McFadden, D. (1986).
\newblock Large sample estimation and hypothesis testing.
\newblock In Engle, R.~F. and McFadden, D., editors, {\em Handbook of
  Econometrics}, volume~4, chapter~36, pages 2111--2245. Elsevier, 1 edition.

\bibitem[\protect\astroncite{Ohlsson}{1989}]{Ohlsson1989}
Ohlsson, E. (1989).
\newblock Asymptotic normality for two-stage sampling from a finite population.
\newblock {\em Probability Theory and Related Fields}, 81(3):341--352.

\bibitem[\protect\astroncite{van~der Vaart}{2000}]{Van2000}
van~der Vaart, A.~W. (2000).
\newblock {\em Asymptotic Statistics}, volume~3.
\newblock Cambridge University Press.

\bibitem[\protect\astroncite{Zhan}{2004}]{Zhan2004}
Zhan, X. (2004).
\newblock {\em Matrix Inequalities}.
\newblock Springer.

\end{thebibliography}

\end{document}